\documentclass[journal]{IEEEtran}

\usepackage{enumerate}
\usepackage{cite}
\usepackage{amsmath}
\usepackage{amscd}
\usepackage{amssymb}
\usepackage{latexsym}
\usepackage{array}
\usepackage{tabularx}
\usepackage[ruled]{algorithm}
\usepackage[noend]{algorithmic}
\usepackage{epsfig}
\newtheorem{lem}{Lemma}

\newtheorem{deft}{Definition}
\newtheorem{prop}{Proposition}
\newtheorem{rem}{Remark}

\newcommand{\bm}[1]{\mbox{\boldmath{$#1$}}}

\hyphenation{op-tical net-works semi-conduc-tor}

\begin{document}
%
\title{Variants of the LLL Algorithm in Digital Communications: Complexity Analysis and Fixed-Complexity Implementation}
\author{Cong~Ling,~\IEEEmembership{Member,~IEEE,}~Wai~Ho~Mow,~\IEEEmembership{Senior Member,~IEEE,}
        and~Nick~Howgrave-Graham
\thanks{This work was presented in part at the IEEE International
Symposium on Information Theory, Nice, France, 2007, the LLL+25 Conference, Caen, France, 2007, the
Information Theory and Applications Workshop, San Diego,
USA, 2008, and the Information Theory Workshop, Taormina, Italy, 2009.}
\thanks{C. Ling is with the Department of Electrical and Electronic Engineering, Imperial College London,
London SW7 2AZ, United Kingdom (e-mail: cling@ieee.org).}
\thanks{W. H. Mow is with the Department of Electronic and Computer Engineering, Hong Kong University of Science and
Technology, Clear Water Bay, Hong Kong, China (e-mail: w.mow@ieee.org).}
\thanks{N. Howgrave-Graham is with the Ab Initio Software Corporation, Lexington, MA 02421, USA (e-mail: nhowgravegraham@abinitio.com).}
}


\maketitle

\begin{abstract}
The Lenstra-Lenstra-Lov\'asz (LLL) algorithm is the most practical lattice reduction algorithm in
digital communications. In this paper, several variants of the LLL algorithm with either lower
theoretic complexity or fixed-complexity implementation are proposed and/or analyzed. Firstly, the
$O(n^4\log n)$ theoretic average complexity of the standard LLL algorithm under the model of i.i.d.
complex normal distribution is derived. Then, the use of effective LLL reduction for lattice
decoding is presented, where size reduction is only performed for pairs of consecutive basis
vectors. Its average complexity is shown to be $O(n^3\log n)$, which is an order lower than
previously thought. To address the issue of variable complexity of standard LLL, two
fixed-complexity approximations of LLL are proposed. One is fixed-complexity effective LLL, while
the other is fixed-complexity LLL with deep insertion, which is closely related to the well known
V-BLAST algorithm. Such fixed-complexity structures are much desirable in hardware implementation
since they allow straightforward constant-throughput implementation.
\end{abstract}



%
\IEEEpeerreviewmaketitle

\section{Introduction}


Lattice pre/decoding for the linear multi-input multi-output (MIMO) channel is a problem of high
relevance in single/multi-antenna, broadcast, cooperative and other multi-terminal communication
systems \cite{mow:IT,viterbo,Damen,agrell}. Maximum-likelihood (ML) decoding for a rectangular
finite subset of a lattice can be realized efficiently by sphere decoding \cite{viterbo,fincke},
whose complexity can nonetheless grow prohibitively with dimension $n$ \cite{jalden}. The decoding
complexity is especially felt in coded systems, where the lattice dimension is larger
\cite{Oggier}. Thus, we often have to resort to approximate solutions, which mostly fall into two
main streams. One of them is to reduce the complexity of sphere decoding, notably by relaxation or
pruning; the former applies lattice reduction pre-processing and searches the infinite
lattice\footnote{It is worth mentioning that, in precoding for MIMO broadcast \cite{Hochwald05} and
differential lattice decoding \cite{ling:jsac}, the lattice is indeed infinite. Infinite lattice
decoding is also necessary when boundary control is difficult.}, while the latter only searches
part of the tree by pruning some branches. Another stream is lattice reduction-aided decoding
\cite{yao,Windpassinger2}, which was first proposed by Babai in \cite{Babai}, which in essence
applies zero-forcing (ZF) or successive interference cancelation (SIC) on a reduced lattice. It is
known that Lenstra, Lenstra and Lov\'asz (LLL) reduction combined with ZF or SIC achieves full
diversity in MIMO fading channels \cite{Taherzadeh:IT,XiaoliMa08} and that lattice-reduction-aided
decoding has constant gap to (infinite) lattice decoding \cite{LingIT07}. It was further shown in
\cite{Jalden:LR-MMSE} that minimum mean square error (MMSE)-based lattice-reduction aided decoding
can achieve the optimal diversity and multiplexing tradeoff. In \cite{wubbenMMSE}, it was shown
that Babai's decoding using MMSE provides near-ML performance for small-size MIMO systems. More
recent research further narrowed down the gap to ML decoding by means of sampling \cite{Shuiyin10}
and embedding \cite{Luzzi10}.

As can be seen, lattice reduction plays a crucial role in MIMO decoding. The celebrated LLL
algorithm \cite{LLL} features polynomial complexity with respect to the dimension for any given
lattice basis but may not be strong enough for some applications. In practice of cryptanalysis
where the dimension of the lattice can be quite high, block Korkin-Zolotarev (KZ) reduction is
popular. Meanwhile, LLL with deep insertions (LLL-deep) is a variant of LLL that extends swapping
in standard LLL to nonconsecutive vectors, thus finding shorter lattice vectors than LLL
\cite{SchnorrEuchner}. Lately, it is found that LLL-deep might be more promising than block-KZ
reduction in high dimensions \cite{GaNg08} since it runs faster than the latter.

Lattices in digital communications are complex-valued in nature. Since the original LLL algorithm
was proposed for real-valued lattices \cite{LLL}, a standard approach to dealing with complex
lattices is to convert them into real lattices. Although the real LLL algorithm is well understood,
this approach doubles the dimension and incurs more computations. There have been several attempts
to extend LLL reduction to complex lattices \cite{Napias96, Howgrave-GrahamThesis, mow03:wcmc,
ComplexLLL}. However, complex LLL reduction is less understood. While our recent work has shown
that the complex LLL algorithm lowers the computational complexity by roughly 50\%
\cite{ComplexLLL}, a rigorous complexity analysis is yet to be developed. In this paper, we analyze
the complexity of complex LLL and propose variants of the LLL algorithm with either lower theoretic
complexity or a fixed-complexity implementation structure.

More precisely, we shall derive the theoretic average complexity $O(n^4\log n)$ of complex LLL,
assuming that the entries of $\mathbf{B}$ are be i.i.d. standard normal, which is the typical MIMO
channel model. For integral bases, it is well known that the LLL algorithm has complexity bound
$O(n^4\log B)$, where $B$ is the maximum length of the column vectors of basis matrix $\mathbf{B}$
\cite{LLL}. The complexity of the LLL algorithm for {\it real or complex-valued bases} is less
known.  To the best of our knowledge, \cite{Daude} was the only work prior to ours
\cite{LingISIT07} on the complexity analysis of the real-valued LLL algorithm for a {\it
probabilistic} model. However, \cite{Daude} assumed basis vectors drawn independently from the unit
ball of $\mathbb{R}^n$, which does not hold in MIMO communications.

Then, we propose the use of a variant of the LLL algorithm---{\it effective LLL reduction} in
lattice decoding. The term {\it effective LLL reduction} was coined in \cite{Howgrave-Graham97},
and was proposed independently by a number of researchers including \cite{Mow:Thesis}. We will show
the average complexity bound $O(n^3\log n)$ in MIMO, i.e., an order lower than that of the standard
LLL algorithm. This is because a weaker version of LLL reduction without full size reduction is
often sufficient for lattice decoding. Besides, it can easily be transformed to a standard
LLL-reduced basis while retaining the $O(n^3\log n)$ bound.

A drawback of the traditional LLL algorithm in digital communications is its variable complexity.
The worst-case complexity could be quite large (see \cite{Jalden08} for a discussion of the
worst-case complexity), and could limit the speed of decoding hardware. To overcome this drawback,
we propose two fixed-complexity approximations of the LLL algorithm, which are based on the
truncation of the parallel structures of the effective LLL and LLL-deep algorithms, respectively.
When implemented in parallel, the proposed fixed-complexity algorithms allow for higher reduction
speed than the sequential LLL algorithm.

In the study of fixed-complexity LLL, we discover an interesting relation between LLL and the
celebrated vertical Bell-labs space-time (V-BLAST) algorithm \cite{vblast:conf}. V-BLAST is a
technique commonly used in digital communications that sorts the columns of a matrix for the
purpose of better detection performance. It is well known that V-BLAST sorting does not improve the
diversity order in multi-input multi-output (MIMO) fading channels; therefore it is not thought to
be powerful enough. In this paper, we will show that V-BLAST and LLL are in fact closely related.
More precisely, we will show that if a basis is both sorted in a sense closely related to V-BLAST
and size-reduced, then it is reduced in the sense of LLL-deep.

{\it Relation to prior work:} The average complexity of real-valued effective LLL in MIMO decoding
was analyzed by the authors in \cite{LingISIT07}. Jald\'en et al. \cite{Jalden08} gave a similar
analysis of complex-valued LLL using a different method, yet the bound in \cite{Jalden08} is less
tight. A fixed-complexity LLL algorithm was given in \cite{VetterLLL}, while the fixed-complexity
LLL-deep was proposed by the authors in \cite{LingITW09}. The fixed-complexity LLL-deep will also
help demystify the excellent performance of the so-called DOLLAR (double-sorted low-complexity
lattice-reduced) detector in \cite{Waters:Dollar}, which consists of a sorted QR decomposition, a
size reduction, and a V-BLAST sorting. Both its strength and weakness will be revealed in this
paper.

The rest of the paper is organized as follows. Section II gives a brief review of LLL and LLL-deep.
In Section III, we present the complexity analysis of complex LLL in MIMO decoding. Effective LLL
and its complexity analysis are given in Section IV. Section V is devoted to fixed-complexity
structures of LLL. Concluding remarks are given in Section VI.

\textit{Notation}: Matrices and column vectors are denoted by upper and lowercase boldface letters
(unless otherwise stated), and the transpose, Hermitian transpose, inverse of a matrix $\mathbf{B}$
by $\mathbf{B}^T$, $\mathbf{B}^H$, $\mathbf{B}^{-1}$, respectively. The inner product in the
complex Euclidean space between vectors $\mathbf{u}$ and $\mathbf{v}$ is defined as $\langle
\mathbf{u}, \mathbf{v}\rangle = \mathbf{u}^H\mathbf{v}$, and the Euclidean length
$\|\mathbf{u}\|=\sqrt{\langle \mathbf{u}, \mathbf{u} \rangle}$. $\Re(x)$ and $\Im(x)$ denote the
real and imaginary part of $x$, respectively. $\lceil x \rfloor$ denotes rounding to the integer
closest to $x$. If $x$ is a complex number, $\lceil x \rfloor$ rounds the real and imaginary parts
separately. The big O notation $f(x)=O(g(x))$ means for sufficiently large $x$, $f(x)$ is bounded
by a constant times $g(x)$ in absolute value.




\section{The LLL Algorithm}

Consider lattices in the complex Euclidean space $\mathbb{C}^n$. A complex lattice is defined as
the set of points $L = \{\mathbf{Bx}|\mathbf{x}\in\mathcal{G}^n\}$, where $\mathbf{B}\in
\mathbb{C}^{n \times n}$ is referred to as the basis matrix, and
$\mathcal{G}=\mathbb{Z}+j\mathbb{Z}$, $j=\sqrt{-1}$ is the set of Gaussian integers. For
convenience, we only consider a square matrix $\mathbf{B}$ in this paper, while the extension to a
tall matrix is straightforward. Aside from the interests in digital communications
\cite{mow03:wcmc, ComplexLLL} and coding \cite{BK:Conway93}, complex lattices have found
applications in factoring polynomials over Gaussian integers \cite{Howgrave-GrahamThesis}.

A lattice $L$ can be generated by infinitely many bases, from which one would like to select one
that is in some sense nice or reduced. In many applications, it is advantageous to have the basis
vectors as short as possible. The LLL algorithm is a polynomial time algorithm that finds short
vectors within an exponential approximation factor \cite{LLL}. The complex LLL algorithm, which is
a modification of its real counterpart, has been described in \cite{Napias96,
Howgrave-GrahamThesis, mow03:wcmc, ComplexLLL}. The complex LLL algorithm works directly with
complex basis $\mathbf{B}$ rather than converting it into the real equivalent. Although the cost
for each complex arithmetic operation is higher than its real counterpart, the total number of
operations required for complex LLL reduction is approximately half of that for real LLL
\cite{ComplexLLL}.

\subsection{Gram-Schmidt (GS) orthogonalization (O), QR and Cholesky decompositions}

For a matrix $\mathbf{B} = [\mathbf{b}_1, ..., \mathbf{b}_n] \in \mathbb{C}^{n\times n}$, the
classic GSO is defined as follows \cite{golub}
\begin{equation}\label{GSO}
  \mathbf{\hat{b}}_i = \mathbf{b}_i -  \sum_{j=1}^{i-1}{\mu_{i,j}\mathbf{\hat{b}}_j},\quad \text{for } i = 1,...,n
\end{equation}
where $\mu_{i,j} = \langle \mathbf{b}_i, \mathbf{\hat{b}}_j\rangle/\|\mathbf{\hat{b}}_j\|^2$. In
matrix notation, it can be written as $\mathbf{B}=\mathbf{\hat{B}}\bm{\mu}^T$, where
$\mathbf{\hat{B}}=[\mathbf{\hat{b}}_1, ..., \mathbf{\hat{b}}_n]$, and $\bm{\mu}=[\mu_{i,j}]$ is a
lower-triangular matrix with unit diagonal elements.

GSO is closely related to QR decomposition $\mathbf{B}=\mathbf{QR}$, where $\mathbf{Q}$ is an
orthonormal matrix and $\mathbf{R}$ is an upper-triangular matrix with nonnegative diagonal
elements. More precisely, one has the relations $\mu_{j,i}=r_{i,j}/r_{i,i}$ and
$\mathbf{\hat{b}}_i=r_{i,i}\cdot \mathbf{q}_i$ where $\mathbf{q}_i$ is the $i$-th column of
$\mathbf{Q}$. QR decomposition can be implemented in various ways such as GSO, Householder and
Givens transformations \cite{golub}.

The Cholesky decomposition $\mathbf{A} = \mathbf{R}^H \mathbf{R}$ computes the R factor of the QR
decomposition from the Gram matrix $\mathbf{A}=\mathbf{B}^H \mathbf{B}$. Given the Gram matrix, the
computational complexity of Cholesky decomposition is approximately $n^3/3$, which is lower than
$2n^3$ of QR decomposition \cite{golub}.

\subsection{LLL Reduction}


\begin{deft}[Complex LLL]\label{DefinitionLLL}
Let $\mathbf{B}=\mathbf{\hat{B}}\bm{\mu}^T$ be the GSO of a complex-valued basis ${\bf B}$. ${\bf
B}$ is LLL-reduced if both of the following conditions are satisfied:
\begin{equation}
\label{lll-cond-3}
    |\Re(\mu_{i,j})| \leq 1/2 \text{ and } |\Im(\mu_{i,j})|\ \leq 1/2
\end{equation}
for $1 \leq j < i \leq n$, and
\begin{equation} \label{lll-cond-4}
    \|\mathbf{\hat{b}}_i\|^2 \geq (\delta - |\mu_{i,i-1}|^2) \|\mathbf{\hat{b}}_{i-1}\|^2
\end{equation}
for $1 < i \leq n$, where $1/2 < \delta \leq 1$ is a factor selected to achieve a good
quality-complexity tradeoff.
\end{deft}

The first condition is the size-reduced condition, while the second is known as the Lov\'asz
condition. It follows from the Lov\'asz condition (\ref{lll-cond-4}) that for an LLL-reduced basis
\begin{equation}\label{Lovasz1}
\|\mathbf{\hat{b}}_i\|^2 \geq (\delta - 1/2)\|\mathbf{\hat{b}}_{i-1}\|^2,
\end{equation}
i.e., the lengths of GS vectors do not drop too much.

Let $\alpha = 1/(\delta-1/2)$. A complex LLL-reduced basis satisfies \cite{Napias96,
Howgrave-GrahamThesis}:
\begin{equation}\label{LLL-bounds}
\begin{split}
    ||{\bf b}_1|| &\leq \alpha^{(n-1)/4}\text{det}^{1/n}L, \\
    ||{\bf b}_1|| &\leq \alpha^{(n-1)/2}\lambda_1,\\
    \prod_{i=1}^n||{\bf b}_i|| &\leq \alpha^{n(n-1)/4}\det L, \\
\end{split}
\end{equation}
where $\lambda_1$ is the length of the shortest vector in $L$, and $\det{L} \triangleq \det
\mathbf{B}$. These properties show in various senses that the vectors of a complex LLL-reduced
basis are not too long. Analogous properties hold for real LLL, with $\alpha$ replaced with $\beta
= 1/(\delta-1/4)$ \cite{LLL}. It is noteworthy that although the bounds (\ref{LLL-bounds}) for
complex LLL are in general weaker than the real-valued counterparts, the actual performances of
complex and real LLL algorithms are very close, especially when $n$ is not too large.

A size-reduced basis can be obtained by reducing each vector individually. The vector
$\mathbf{b}_k$ is size-reduced if $|\Re(\mu_{k,l})| \leq 1/2$ and $|\Im(\mu_{k,l})| \leq 1/2$ for
all $l < k$. Algorithm \ref{alg:sizereduction} shows how $\mathbf{b}_k$ is size-reduced against
$\mathbf{b}_{l}$ ($l<k$). To size-reduce $\mathbf{b}_k$, we call Algorithm \ref{alg:sizereduction}
for $l = k-1$ down to 1. Size-reducing $\mathbf{b}_k$ does not affect the size reduction of the
other vectors. Furthermore, it is not difficult to see that size reduction does not change the GS
vectors.

Algorithm \ref{alg:LLL} describes the LLL algorithm (see \cite{ComplexLLL} for the pseudo-code of
complex LLL). It computes a reduced basis by performing size reduction and swapping in an iterative
manner. If the Lov\'asz condition (\ref{lll-cond-4}) is violated, the basis vectors $\mathbf{b}_k$
and $\mathbf{b}_{k-1}$ are swapped; otherwise it carries out size reduction to satisfy
(\ref{lll-cond-3}). The algorithm is known to terminate in a finite number of iterations for any
given input basis $\mathbf{B}$ and for $\delta \leq 1$ \cite{LLL} (note that this is true even when
$\delta = 1$ \cite{Akhavi03, LenstraFlags}). By an iteration we mean the operations within the
``while" loop in Algorithm \ref{alg:LLL}, which correspond to an increment or decrement of the
variable $k$.


\begin{algorithm}[t]
\caption{\quad Pairwise Size Reduction} \label{alg:sizereduction}
{\bf{Input}}: Basis vectors $\mathbf{b}_k$ and $\mathbf{b}_l$ ($l<k$)\\
\text{\hspace{0.95cm}} GSO coefficient matrix $[\mu_{i,j}]$\\
{\bf{Output}}: Basis vector $\mathbf{b}_k$ size-reduced against $\mathbf{b}_l$ \\
\text{\hspace{1.2cm}} Updated GSO coefficient matrix $[\mu_{i,j}]$

\begin{algorithmic}[0]

\IF{$|\Re(\mu_{k,l})| \geq 1/2$ or $|\Im(\mu_{k,l})| \geq 1/2$}

\STATE ${\bf b}_k := {\bf b}_k - \lceil\mu_{k,l}\rfloor {\bf b}_{l}$


\FOR{$j=1,2,...,l$} \STATE $\mu_{k,j} := \mu_{k,j} - \lceil\mu_{k,l}\rfloor\mu_{l,j}$ \ENDFOR


\ENDIF

\end{algorithmic}
\end{algorithm}

\begin{algorithm}[t]
\caption{\quad LLL Algorithm} \label{alg:LLL}
{\bf{Input}:} A basis $\mathbf{B} = [\mathbf{b}_1, ... \mathbf{b}_n]$\\
{\bf{Output}:} The LLL-reduced basis
\begin{algorithmic}[1]

\STATE compute GSO $\mathbf{B}=\mathbf{\hat{B}} [\mu_{i,j}]^T$

\STATE $k := 2$

\WHILE{$k \leq n$}

\STATE size-reduce $\mathbf{b}_k$ against $\mathbf{b}_{k-1}$

\IF{$\|\mathbf{\hat{b}}_k + \mu_{k,k-1}\mathbf{\hat{b}}_{k-1}\|^2 < \delta
  \|\mathbf{\hat{b}}_{k-1}\|^2$}

\STATE swap $\mathbf{b}_k$ and $\mathbf{b}_{k-1}$ and update GSO

\STATE $k := \max(k-1,2)$

\ELSE

\FOR{$l = k-2, k-3, ..., 1$} \STATE size-reduce $\mathbf{b}_k$ against $\mathbf{b}_l$ \ENDFOR

\STATE $k := k + 1$

\ENDIF

\ENDWHILE
\end{algorithmic}
\end{algorithm}

Obviously, for real-valued basis matrix $\mathbf{B}$, Definition \ref{DefinitionLLL} and Algorithm
\ref{alg:LLL} coincide with the standard real LLL algorithm. Some further relations between real
and complex LLL are discussed in Appendix \ref{AppendixI}.

\begin{rem}
The LLL algorithm can also operate on the Gram matrix \cite{Nguyen2}. To do this, one applies the
Cholesky decomposition and updates the Gram matrix. Everything else remains pretty much the same.
\end{rem}

\subsection{LLL-Deep}

LLL-deep extends the swapping step to all vectors before $\mathbf{b}_k$, as shown in Algorithm
\ref{alg:LLLdeep}. The standard LLL algorithm is restricted to $i=k-1$ in Line 5. LLL-deep can find
shorter vectors. However, there are no proven bounds for LLL-deep other than those for standard
LLL. The experimental complexity of LLL-deep is a few times as much as that of LLL, although the
worst-case complexity is exponential. To limit the complexity, it is common to restrict the
insertion within a window \cite{SchnorrEuchner}. However, we will not consider this window in this
paper.




%
%
%
%
%
%
%

\begin{algorithm}[t]
\caption{\quad LLL Algorithm with Deep Insertion} \label{alg:LLLdeep}
{\bf{Input}:} A basis $\mathbf{B} = [\mathbf{b}_1, ... \mathbf{b}_n]$\\
{\bf{Output}:} The LLL-deep-reduced basis
\begin{algorithmic}[1]

\STATE compute GSO $\mathbf{B}=\mathbf{\hat{B}} [\mu_{i,j}]^T$

\STATE $k := 2$

\WHILE{$k \leq n$}

\STATE size-reduce $\mathbf{b}_k$ against $\mathbf{b}_{k-1}, ..., \mathbf{b}_{2}, \mathbf{b}_{1}$

\IF{$\exists i, 1\leq i <k$ such that
$\sum_{j=i}^k\mu_{k,j}^2\|\mathbf{\hat{b}}_j\|^2<\delta\|\mathbf{\hat{b}}_i\|^2$}

\STATE for the smallest such $i$, insert $\mathbf{b}_k$ before $\mathbf{b}_{i}$ and update GSO

\STATE $k := \max(i,2)$

\ELSE

\STATE $k := k + 1$

\ENDIF

\ENDWHILE
\end{algorithmic}
\end{algorithm}

\section{Complexity Analysis of Complex LLL}

In the previous work \cite{ComplexLLL}, it was only qualitatively argued that complex LLL
approximately reduces the complexity by half, while a rigorous analysis was lacking. In this
section, we complement the work in \cite{ComplexLLL} by evaluating the computational complexity in
terms of (complex-valued) floating-point operations (flops). The other operations such as looping
and swapping are ignored. The complexity analysis consists of two steps. Firstly, we bound the
average number of iterations. Secondly, we bound the number of flops of a single iteration.

\subsection{Average Number of Iterations}
To analyze the number of iterations, we use a standard argument, where we consider the LLL
potential \cite{LLL}
\begin{equation}\label{potential}
  \mathcal{D} = \prod_{i=1}^{n-1}{\|\mathbf{\hat{b}}_i\|^{2(n-i)}}.
\end{equation}
Obviously, $\mathcal{D}$ only changes during the swapping step. This happens when
\begin{equation} \label{swap}
    \|\mathbf{\hat{b}}_k\|^2 < (\delta - |\mu_{k,k-1}|^2)\|\mathbf{\hat{b}}_{k-1}\|^2
\end{equation}
for some $k$. After swapping, $\mathbf{\hat{b}}_{k-1}$ is replaced by
$\mathbf{\hat{b}}_{k}+\mu_{k,k-1}\mathbf{\hat{b}}_{k-1}$. Thus $\|\mathbf{\hat{b}}_{k-1}\|^2$ as
well as $\mathcal{D}$ shrinks by a factor less than $\delta$.

The number $K$ of iterations is exactly the number of Lov\'asz tests. Let $K^+$ and $K^-$ be the
numbers of positive and negative tests, respectively. Obviously, $K=K^++K^-$. Since $k$ is
incremented in a positive test and decremented in a negative test, and since $k$ starts at 2 and
ends at $n$, we must have $K^+ \leq K^-+(n-1)$ (see also \cite{Daude}). Thus it is sufficient to
bound $K^-$.

Let $A = \max_i{\|\mathbf{\hat{b}}_i\|^2}$ and $a = \min_i{\|\mathbf{\hat{b}}_i\|^2}$. The initial
value of $\mathcal{D}$ can be bounded from above by $A^{n(n-1)/2}$. To bound the number of
iterations for a complex-valued basis, we invoke the following lemma \cite{LLL,Daude}, which holds
for complex LLL as well.

\begin{lem}\label{Lemma1}
During the execution of the LLL algorithm, the maximum $A$ is non-increasing while the minimum $a$
is non-decreasing.
\end{lem}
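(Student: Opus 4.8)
The plan is to reduce everything to the effect of a single swap. As noted just before Algorithm~\ref{alg:sizereduction}, size reduction does not change the Gram--Schmidt vectors $\mathbf{\hat{b}}_i$, hence it changes neither $A$ nor $a$; the only step in Algorithm~\ref{alg:LLL} that can move them is the swap of $\mathbf{b}_k$ and $\mathbf{b}_{k-1}$. So it suffices to show that one swap cannot increase $A$ and cannot decrease $a$. This is the classical argument of \cite{LLL,Daude}, and I would check that it goes through in the complex case with $\mu_{k,k-1}^2$ replaced by $|\mu_{k,k-1}|^2$.

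First I would isolate which GS norms a swap can change. For $i<k-1$, $\mathbf{\hat{b}}_i$ depends only on $\mathbf{b}_1,\dots,\mathbf{b}_i$, none of which is touched; for $i>k$, $\mathbf{\hat{b}}_i$ is the component of $\mathbf{b}_i$ orthogonal to $\mathrm{span}(\mathbf{b}_1,\dots,\mathbf{b}_{i-1})$, and this span (for $i-1\ge k$) contains both $\mathbf{b}_{k-1}$ and $\mathbf{b}_k$, so it is unchanged as a set by their swap. Hence only $\mathbf{\hat{b}}_{k-1}$ and $\mathbf{\hat{b}}_k$ change, say to $\mathbf{\hat{b}}'_{k-1}$ and $\mathbf{\hat{b}}'_{k}$. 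I would then record the two standard identities, both living in the $2$-dimensional space spanned by the projections of $\mathbf{b}_{k-1},\mathbf{b}_k$ orthogonal to $\mathbf{\hat{b}}_1,\dots,\mathbf{\hat{b}}_{k-2}$: from the excerpt's remark that $\mathbf{\hat{b}}_{k-1}$ is replaced by $\mathbf{\hat{b}}_{k}+\mu_{k,k-1}\mathbf{\hat{b}}_{k-1}$, one gets $\|\mathbf{\hat{b}}'_{k-1}\|^2 = \|\mathbf{\hat{b}}_k\|^2 + |\mu_{k,k-1}|^2\|\mathbf{\hat{b}}_{k-1}\|^2$, and since a swap preserves the area of that planar sublattice, $\|\mathbf{\hat{b}}'_{k-1}\|^2\,\|\mathbf{\hat{b}}'_{k}\|^2 = \|\mathbf{\hat{b}}_{k-1}\|^2\,\|\mathbf{\hat{b}}_{k}\|^2$.

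The key step is to show that both $\|\mathbf{\hat{b}}'_{k-1}\|^2$ and $\|\mathbf{\hat{b}}'_k\|^2$ lie in the closed interval $[\,\|\mathbf{\hat{b}}_k\|^2,\ \|\mathbf{\hat{b}}_{k-1}\|^2\,]$. A swap is triggered exactly when the Lov\'asz test fails, i.e.\ (\ref{swap}) holds, which already gives $\|\mathbf{\hat{b}}_k\|^2 < (\delta-|\mu_{k,k-1}|^2)\|\mathbf{\hat{b}}_{k-1}\|^2 \le \|\mathbf{\hat{b}}_{k-1}\|^2$ (using $\delta\le 1$), so $\|\mathbf{\hat{b}}_k\|^2$ and $\|\mathbf{\hat{b}}_{k-1}\|^2$ are respectively the smaller and larger endpoint. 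From the first identity, $\|\mathbf{\hat{b}}'_{k-1}\|^2 \ge \|\mathbf{\hat{b}}_k\|^2$, and substituting (\ref{swap}) gives $\|\mathbf{\hat{b}}'_{k-1}\|^2 < (\delta-|\mu_{k,k-1}|^2)\|\mathbf{\hat{b}}_{k-1}\|^2 + |\mu_{k,k-1}|^2\|\mathbf{\hat{b}}_{k-1}\|^2 = \delta\|\mathbf{\hat{b}}_{k-1}\|^2 \le \|\mathbf{\hat{b}}_{k-1}\|^2$; thus $\|\mathbf{\hat{b}}'_{k-1}\|^2$ is in the interval. Dividing the area identity by $\|\mathbf{\hat{b}}'_{k-1}\|^2$ and using $\|\mathbf{\hat{b}}_k\|^2 \le \|\mathbf{\hat{b}}'_{k-1}\|^2 \le \|\mathbf{\hat{b}}_{k-1}\|^2$ puts $\|\mathbf{\hat{b}}'_k\|^2$ in the interval as well. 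Since every GS norm at a position other than $k-1,k$ is untouched and the two replacement values stay between the old values $\|\mathbf{\hat{b}}_{k-1}\|^2$ and $\|\mathbf{\hat{b}}_k\|^2$, the global maximum $A$ cannot go up and the global minimum $a$ cannot go down, which is the claim.

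I do not expect a genuine obstacle here; the only care needed is the bookkeeping that a swap leaves $\mathbf{\hat{b}}_i$ untouched for $i\notin\{k-1,k\}$, and the correct derivation of the two update identities in the complex (Gaussian-integer) setting, where the Lov\'asz constant involves $|\mu_{k,k-1}|^2$ rather than $\mu_{k,k-1}^2$. Everything else is elementary arithmetic with the inequality furnished by the failed Lov\'asz test.
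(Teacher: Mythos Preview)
Your argument is correct and is precisely the classical one from \cite{LLL,Daude}: only a swap can move the GS norms, and the two updated norms $\|\mathbf{\hat b}'_{k-1}\|^2,\|\mathbf{\hat b}'_{k}\|^2$ both land in $[\|\mathbf{\hat b}_k\|^2,\|\mathbf{\hat b}_{k-1}\|^2]$ via the update formula and the area identity. The paper does not actually supply its own proof of Lemma~\ref{Lemma1}; it simply invokes the lemma from those references and remarks that it carries over to the complex setting, which is exactly what you verify by replacing $\mu_{k,k-1}^2$ with $|\mu_{k,k-1}|^2$.
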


In other words, the LLL algorithm tends to reduce the interval $[a,A]$ where the squared lengths of
GS vectors reside. From Lemma \ref{Lemma1}, we obtain
\begin{equation}\label{num_iter}
  K^- \leq \frac{n(n-1)}{2}\log\frac{A}{a}
\end{equation}
where the logarithm is taken to the base $1/\delta$ (this will be the case throughout the paper).

Assuming that the basis vectors are i.i.d. in the unit ball of $\mathbb{R}^n$, Daude and Vallee
showed that the mean of $K^-$ is upper-bounded by $O(n^2\log n)$ \cite{Daude}. The analysis for the
i.i.d. Gaussian model is similar. Yet, here we use the exact value of $\mathcal{D}_0$, which is the
initial value of $\mathcal{D}$, to bound the average number of iterations. It leads to a better
bound than using the maximum $A$ in (\ref{num_iter}). The lower bound on $\mathcal{D}$ is followed:
\[\mathcal{D}_\text{lower} = \frac{n(n-1)}{2}\log a \leq \mathcal{D}.\]
Accordingly, the mean of $K^-$ is bounded by
\begin{equation}\label{Kminus}
\begin{split}
  E[K^-] &\leq E \left[ \log \frac{\mathcal{D}_0}{\mathcal{D}_\text{lower}}\right]\\
  &= E\left[\log \mathcal{D}_0\right] - E\left[\log
  \mathcal{D}_\text{lower}\right]\\
  &= E\left[\log \mathcal{D}_0\right] - \frac{n(n-1)}{2}\log E\left[\log
  a\right].
\end{split}
\end{equation}
We shall bound the two terms separately.

The QR decomposition of an i.i.d. complex normal random matrix has the following property: the
squares of the diagonal elements $r_{i,i}$ of the matrix $\mathbf{R}$ are statistically independent
$\chi^2$ random variables with $2(n-i+1)$ degrees of freedom \cite{Goodman}. Since
$r^2_{i,i}=\|\mathbf{\hat{b}}_i\|^2$, we have
\begin{equation}\label{D0}
\begin{split}
  E\left[\log \mathcal{D}_0\right] &= \sum_{i=1}^{n-1}{(n-i)E\left[\log
  \|\mathbf{\hat{b}}_i\|^2\right]}\\
  & \leq \sum_{i=1}^{n-1}{(n-i)\log
  E\left[\|\mathbf{\hat{b}}_i\|^2\right]}\\
  & = \sum_{i=1}^{n-1}{(n-i)\log2(n-i+1)}\\
  & \leq \frac{n(n-1)}{2}\log 2n\\
\end{split}
\end{equation}
where the first inequality follows from Jensen's inequality $E[\log X]\leq \log E[X]$.

It remains to determine $E\left[\log a\right]$. The cumulative distribution function (cdf) $F_a(x)$
of the minimum
$a$ can be written as 
\[F_a(x)=1-\prod_{i=1}^{n} \left[1-F_i(x)\right]\]
where $F_i(x)$ denotes the cdf of a $\chi^2$ random variables with $2i$ degrees of freedom:
\begin{equation}\label{chi2cdf}
\begin{split}
  F_i(x) &= \int_0^x{\frac{1}{2^{i}\Gamma(i)}y^{i-1}e^{-y/2}dy}, \quad x \geq 0\\
  & = 1 - e^{-x} \sum_{m=0}^{i-1} {\frac{x^m}{m!}}.
\end{split}
\end{equation}
As $n$ tends to infinity, $F_a(x)$ approaches a limit cdf $\bar{F}_a(x)$ that is not a function of
$n$. Since $a$ deceases with $n$, $E\left[\log a\right]$ is necessarily bounded from below by its
limit:
\begin{equation}\label{integratmin}
  E\left[\log a\right] \geq \int_0^\infty \log x d\bar{F}_a(x).
\end{equation}
The convergence of $F_a(x)$ is demonstrated in Fig. \ref{fig:chi2min}.
\begin{figure}[t]

\centering\centerline{\epsfig{figure=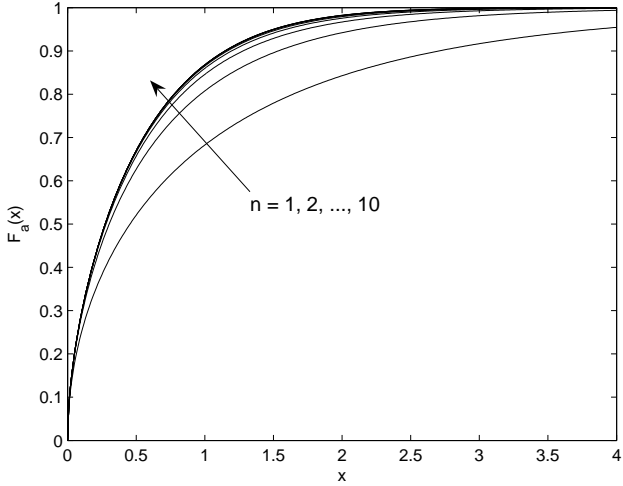,width=9.4cm}}

\caption{Convergence of $F_a(x)$ as $n$ increases.}

\label{fig:chi2min}
\end{figure}

Although it is difficult to evaluate the above integral exactly, we can derive a lower bound. To do
this, we examine the behavior of the limit probability density function (pdf) $\bar{f}_a(x)$.
$\bar{f}_a(x)$ is a decreasing function. Moreover, $\bar{f}_a(0)=1$. To show this, note that as
$x\rightarrow 0^+$ we have the approximation $e^{-x} \approx 1-x$ and $\sum_{m=0}^{i-1} x^m/m!
\approx 1+x $. Therefore, as $x\rightarrow 0^+$
\begin{equation}\label{approx}
\begin{split}
  \bar{F}_a(x) &\approx 1 - \lim_{n\rightarrow\infty}(1-x)\prod_{i=2}^n{(1-x)(1+x)}\\
  & = 1 - \lim_{n\rightarrow\infty}(1-x)(1-x^2)^{n-1}\\
\end{split}
\end{equation}
and accordingly $\bar{F}_a(x) \approx x$ as $x \rightarrow 0^+$. Then
$\bar{f}_a(0)=d\bar{F}_a(x)/dx|_{x=0^+} = 1$.

Then we have
\begin{equation}\label{Eloga}
\begin{split}
  E\left[\log a\right] &\geq \int_0^\infty \log x \bar{f}_a(x)dx\\
  & \geq \int_0^1 \log x \bar{f}_a(x)dx\\
  & \geq \int_0^1 \log x dx = -1
\end{split}
\end{equation}
where the last inequality follows from $\bar{f}(x) < 1$ for $x>0$.

Substituting (\ref{D0}),(\ref{Eloga}) into (\ref{Kminus}), we obtain
\begin{equation}\label{KminusBound}
    E[K^-] \leq \frac{n(n-1)}{2}(\log 2n + 1).
\end{equation}
Therefore, we arrive at the following result:

\begin{prop}
Under the i.i.d. complex normal model of the basis matrix $\mathbf{B}$, the total number of
iterations of the LLL algorithm can be bounded as
\begin{equation}\label{KBound}
    E[K] \leq {n(n-1)}(\log 2n + 1)+n \approx n^2\log n.
\end{equation}
\end{prop}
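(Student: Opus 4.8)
Since the proposition is the direct payoff of the estimate $E[K^-] \le \frac{n(n-1)}{2}(\log 2n + 1)$ in (\ref{KminusBound}), the plan is essentially an assembly step together with a recap of why that estimate holds. I would write $K = K^+ + K^-$ as in the text. Because the index $k$ is incremented on a positive Lov\'asz test, decremented (but floored at $2$) on a negative one, starts at $2$, and the loop exits once $k$ exceeds $n$, the net displacement argument gives $K^+ \le K^- + (n-1)$, hence $K \le 2K^- + (n-1)$. Taking expectations and inserting (\ref{KminusBound}),
\[E[K] \le 2\,E[K^-] + (n-1) \le n(n-1)(\log 2n + 1) + n,\]
which is exactly the claimed bound, of order $n^2\log n$. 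So all the genuine content is upstream, in establishing (\ref{KminusBound}).

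For completeness I would recall that pipeline. The LLL potential $\mathcal{D}$ of (\ref{potential}) changes only at swaps, where (\ref{swap}) forces it to shrink by a factor strictly below $\delta \le 1$; combined with Lemma~\ref{Lemma1}, which keeps $\mathcal{D}$ above $a^{n(n-1)/2}$ throughout the run (with $a$ the smallest squared Gram--Schmidt length at termination), this yields the pathwise inequality $K^- \le \log_{1/\delta}\!\big(\mathcal{D}_0 / a^{n(n-1)/2}\big)$. Taking expectations separates the problem into an upper bound on $E[\log \mathcal{D}_0]$ and a lower bound on $E[\log a]$. The former follows from Jensen's inequality plus the classical fact that $\|\mathbf{\hat b}_i\|^2 = r_{i,i}^2$ is $\chi^2$ with $2(n-i+1)$ degrees of freedom, giving $E[\log \mathcal{D}_0] \le \frac{n(n-1)}{2}\log 2n$.

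The one step I expect to be genuinely delicate is the lower bound $E[\log a] \ge -1$. Here one must pass to the limiting cdf $\bar F_a$ of the minimum of the $n$ independent but non-identically distributed $\chi^2$ lengths; this limit exists because $a$ is stochastically decreasing in $n$, which simultaneously makes the limiting quantity a legitimate lower bound for the finite-$n$ expectation. One then extracts from the small-$x$ expansion $\bar F_a(x) \approx x$ that the limiting density satisfies $\bar f_a(0) = 1$ and is non-increasing, so $\bar f_a(x) < 1$ for $x > 0$, and concludes $\int_0^\infty \log x\, \bar f_a(x)\,dx \ge \int_0^1 \log x\, dx = -1$. The subtlety is that the usual i.i.d. extreme-value heuristic does not apply directly, since the degrees of freedom grow with the index; the behavior of $\bar f_a$ near the origin has to be read off from the explicit $\chi^2$ cdf rather than quoted. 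Everything else is elementary bookkeeping.
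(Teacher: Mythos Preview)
Your proposal is correct and follows essentially the same route as the paper: the proposition is just the assembly of $K \le 2K^- + (n-1)$ with the bound (\ref{KminusBound}), and your recap of the potential/Jensen/limit-cdf pipeline matches the paper's derivation step for step. One small slip: the $a$ entering the lower bound $\mathcal{D} \ge a^{n(n-1)/2}$ should be the \emph{initial} minimum squared Gram--Schmidt length (whose $\chi^2$ distribution you subsequently invoke), not the terminal one---Lemma~\ref{Lemma1} is precisely what lets the initial $a$ serve as a uniform lower bound throughout the run.
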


\text{ }

\begin{rem}
A similar analysis in \cite{Jalden08} applied the bounds $\sigma_1 \geq \sqrt{A}$ and $\sqrt{a} \geq
\sigma_n$, where $\sigma_1$ and $\sigma_n$ are the maximum and minimum singular value of
$\mathbf{B}$, respectively. Accordingly, the resultant bound $n^2\log(\sigma_1/\sigma_n)$ is less
tight. In fact, \cite{Jalden08} showed the bound $E[K] \lesssim 4n^2\log n$, which is larger by a
factor of 4.
\end{rem}

\subsection{Number of Flops for Each Iteration}


The second step proceeds as follows. Updating the GSO coefficients \cite{LLL} during the swapping
step costs $6(n-k)+7 \leq 6n-5$ flops ($k\geq 2$), whereas pairwise size reduction for $(k,k-1)$
costs $2n+2(k-1) \leq 4n-2$ flops ($k\geq 2$). Testing the Lov\'asz condition as (\ref{Lovasz1})
costs 3 flops each time. Besides, the initial GSO costs $2n^3$ flops. Therefore, excluding full
size reduction, the cost is bounded by\footnote{The reason why we separately counts $\mathcal{C}_1$
and $\mathcal{C}_2$ will become clear in the next Section.}
\begin{equation}\label{totalcost1}
\begin{split}
  \mathcal{C}_1 &\leq (6n-5)K^- + (4n-2+3)(K^-+K^+) + 2n^3\\
              &\leq (6n-5)\frac{n(n-1)}{2}\log 2n\\
              &\quad + (4n+1)[n(n-1)\log 2n +(n-1)]+2n^3\\
              &= 7n^2(n-1)\log 2n -\frac{3}{2}n(n-1)\log 2n \\
              &\quad+(4n+1)(n-1)+2n^3\\
              &\leq 7n^3\log 2n+2n^3.
\end{split}
\end{equation}

During each step, the number of flops due to full size reduction is no more than
\begin{equation}
{\sum_{l=1}^{k-2}{(2n+2l)}} \leq 3n^2.
\end{equation}
Therefore, the subtotal amount of flops due to full size reduction are
\begin{equation}\label{extracost}
  \mathcal{C}_2 = 3n^2K^+ \leq 3n^2 \left[\frac{n(n-1)}{2}\log 2n + (n-1)\right]
\end{equation}
which is $O(n^4 \log n)$ and thus is dominant. This results in $O(n^4 \log n)$ complexity bound on
the overall complexity $\mathcal{C}=\mathcal{C}_1+\mathcal{C}_1$ of the complex LLL algorithm.

%
%
%
%

\subsection{Comparison with Real-valued LLL}

In the conventional approach, the complex-valued matrix $\mathbf{B}$ is converted into real-valued
$\mathbf{B}_{\text{R}}$ (see Appendix \ref{AppendixI}). Obviously, $\mathbf{B}$ and
$\mathbf{B}_{\text{R}}$ have the same values of $A$ and $a$, and same convergence speed determined
by $\delta$. Since the size of $\mathbf{B}_{\text{R}}$ is twice that of $\mathbf{B}$, real LLL
needs about four times as many iterations as complex LLL.

For real-valued LLL, the expressions (\ref{totalcost1}) and (\ref{extracost}) are almost the same
\cite{LingISIT07}, with a doubled size $n$. Under the assumption that on average a complex
arithmetic operation costs four times as much as a real operation, $\mathcal{C}_1$ of complex LLL
is half of that of real LLL. Meanwhile, when comparing $\mathcal{C}_2$, there is a subtle
difference, i.e., the chance of full size reduction (Lines 9 and 10 in Algorithm \ref{alg:LLL}) is
doubled for complex LLL \cite{ComplexLLL}. Therefore, $\mathcal{C}_2$ of complex LLL is also half.
Then the total cost of complex LLL is approximately half.

\subsection{Reduction of the Dual Basis}


Sometimes it might be more preferable to reduce the dual basis $\mathbf{B}^* \triangleq
(\mathbf{B}^{-1})^H\mathbf{J}$, where $\mathbf{J}$ is the column-reversing matrix \cite{LingIT07}.
In the following, we show that the $O(n^2\log n)$ average number of iterations still holds.

Let $\mathbf{\hat{B}}^*, A^*, a^*$ be the corresponding notations for the dual basis. Due to the
relation $\|\hat{\mathbf{b}}_{i}\|=\|\hat{\mathbf{b}}_{n-i+1}^*\|^{-1}$ \cite{lagarias}, we have
\[\frac{A}{a}=\frac{1/a^*}{1/A^*}=\frac{A^*}{a^*}.\] Thus, the bound on the number of iterations is
\emph{exactly the same}. In particular, the average number of iterations is the same.

In MIMO broadcast, it is $\mathbf{B}^{-1}$ that needs to be reduced \cite{Windpassinger04}. Noting
that $\mathbf{B}^{-1}$ has the same statistics as $(\mathbf{B}^{-1})^H\mathbf{J}$ because
$\mathbf{B}$ is i.i.d. normal, it is easy to see that the bound on $K^-$ is again the same.

\begin{rem}
The same conclusion was drawn in \cite{Jalden08} by examining the singular values.
\end{rem}

%
%
%
%
%
%
%
%
%


\section{Effective LLL Reduction}

\subsection{Effective LLL}

Since some applications such as sphere decoding and SIC only require the GS vectors
$\mathbf{\hat{b}}_{i}$ rather than the basis vectors themselves, and since size reduction does not
change them, a weaker version of the LLL algorithm is sufficient for such applications. This makes
it possible to devise a variant of the LLL algorithm that has lower theoretic complexity than the
standard one.

From the above argument it seems that we would be able to remove the size reduction operations at
all. However, this is not the case. An inspection shows that the size-reduced condition for two
consecutive basis vectors
\begin{equation}\label{EffectiveCondition}
  |\Re(\mu_{i,i-1})| \leq 1/2 \text{ and } |\Im(\mu_{i,i-1})|\ \leq 1/2, \quad 1 < i \leq n
\end{equation}
is essential in maintaining the lengths of the GS vectors. In other words,
(\ref{EffectiveCondition}) must be kept along with the Lov\'asz condition so that the lengths of GS
vectors will not be too short. Note that their lengths are related to the performance of SIC and
the complexity of sphere decoding. We want the lengths to be as even as possible so as to improve
the SIC performance and reduce the complexity of sphere decoding.

A basis satisfies condition (\ref{EffectiveCondition}) and the Lov\'asz condition
(\ref{lll-cond-4}) is called an {\it effectively} LLL-reduced basis in \cite{Howgrave-Graham97}.
Effective LLL reduction terminates in exactly the same number of iterations, because size-reducing
against other vectors has no impact on the Lov\'asz test. In addition to (\ref{Lovasz1}), an
effectively LLL-reduced basis has other nice properties. For example, if a basis is effectively
LLL-reduced, so is its dual basis \cite{Howgrave-Graham97,Howgrave-GrahamThesis}.

Effective LLL reduction permits us to remove from Algorithm \ref{alg:LLL} the most expensive part,
i.e., size-reducing $\mathbf{b}_k$ against $\mathbf{b}_{k-2}, \mathbf{b}_{k-3}, ..., \mathbf{b}_1$
(Lines 9-10). For integral bases, doing this may cause excessive growth of the (rational) GSO
coefficients $\mu_{i,j}$, $j<i-1$, and the increase of bit lengths will likely offset the
computational saving. This is nonetheless not a problem in MIMO decoding, since the basis vectors
and GSO coefficients can be represented by floating-point numbers after all. We use a model where
floating-point operations take constant time, and accuracy is assumed not to perish. There is
strong evidence that this model is practical, because the correctness of floating-point LLL for
\emph{integer} lattices has been proven \cite{Nguyen2}. Although the extension of the proof to the
case of continuous bases seems very difficult, in practice this model is valid as long as the
arithmetic precision is sufficient for the lattice dimensions under consideration.


We emphasize that under this condition the effective and standard LLL algorithms have the same
error performance in the application to SIC and sphere decoding, as asserted by Proposition
\ref{prop1}.

\begin{prop}\label{prop1}
The SIC and sphere decoder with effective LLL reduction finds {\it exactly the same} lattice point
as that with standard LLL reduction.
\end{prop}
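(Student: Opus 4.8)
The plan is to show that the two flavors of LLL produce bases that differ only by the partial size-reduction operations against non-consecutive vectors, and that SIC and sphere decoding are invariant under exactly those operations. First I would recall the key fact already noted in the excerpt: size reduction of $\mathbf{b}_k$ against $\mathbf{b}_l$ leaves all Gram--Schmidt vectors $\hat{\mathbf{b}}_1,\dots,\hat{\mathbf{b}}_n$ unchanged, and more generally it only alters the strictly-lower-triangular $\mu_{k,l}$ entries, not the GS norms $\|\hat{\mathbf{b}}_i\|^2$. Since the Lov\'asz test in Line 5 of Algorithm~\ref{alg:LLL} depends only on $\|\hat{\mathbf{b}}_k\|^2$, $\|\hat{\mathbf{b}}_{k-1}\|^2$ and $\mu_{k,k-1}$, and since condition~(\ref{EffectiveCondition}) is maintained in both algorithms, effective LLL and standard LLL execute the \emph{same} sequence of swaps and visit the \emph{same} sequence of indices $k$. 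Hence the two output bases $\mathbf{B}_{\mathrm{std}}$ and $\mathbf{B}_{\mathrm{eff}}$ span the same lattice $L$, have identical QR factors $\mathbf{Q}$ and identical $\mathbf{R}$ up to the strictly-above-diagonal entries $r_{i,j}$ with $j>i+1$ — equivalently, $\mathbf{B}_{\mathrm{std}} = \mathbf{B}_{\mathrm{eff}}\,\mathbf{U}$ for some unimodular upper-triangular $\mathbf{U}$ over $\mathcal{G}$ with unit diagonal.

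Next I would spell out the decoding argument. For SIC (Babai's nearest-plane algorithm) on target $\mathbf{y}$, the algorithm works from $i=n$ down to $1$, at each stage rounding the coordinate of the residual along $\hat{\mathbf{b}}_i$ and subtracting the appropriate integer multiple of $\mathbf{b}_i$. Because the $\hat{\mathbf{b}}_i$ are identical for the two bases and the rounding decisions depend only on the $\hat{\mathbf{b}}_i$ and on the already-chosen integer coordinates, the sequence of integer coordinates $(x_n,\dots,x_1)$ produced by nearest-plane is the same for both bases in the $\hat{\mathbf{b}}_i$-coordinate system; translating back through $\mathbf{B}_{\mathrm{std}} = \mathbf{B}_{\mathrm{eff}}\mathbf{U}$ shows the returned lattice point $\mathbf{B}\mathbf{x}\in L$ is literally the same point. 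For sphere decoding, the search enumerates all $\mathbf{x}\in\mathcal{G}^n$ with $\|\mathbf{y}-\mathbf{B}\mathbf{x}\|\le$ radius; this set is $\{$lattice points in a ball$\}$, which depends only on $L$ and not on the basis, and the triangular structure used to organize the tree search is governed by $\mathbf{R}$'s diagonal and the $\mu_{k,k-1}$'s in the same way. So the minimizer — the closest lattice point within the sphere — is basis-independent, and in particular agrees for the two reductions.

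The one place that needs care, and which I expect to be the main obstacle, is making precise the claim that ``the decoder's output is the same'' when the output is conventionally reported as the integer vector $\mathbf{x}$ rather than the lattice point $\mathbf{B}\mathbf{x}$: the integer vectors genuinely differ (they are related by $\mathbf{x}_{\mathrm{eff}} = \mathbf{U}\mathbf{x}_{\mathrm{std}}$), so the statement must be read — and should be stated — at the level of the recovered lattice point $\mathbf{B}\mathbf{x}\in L$, which is exactly what the proposition says. A secondary subtlety is the tie-breaking convention in rounding $\lceil\cdot\rfloor$ and in the sphere search: one should observe that any fixed deterministic tie-breaking rule, applied in the common $\hat{\mathbf{b}}_i$-coordinate frame, yields the same lattice-point output for both bases, so ties cause no discrepancy. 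With these two points addressed, the proof reduces to the bookkeeping of the previous paragraphs.
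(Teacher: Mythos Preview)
Your approach is essentially the same as the paper's, which gives a two-line proof: SIC and sphere decoding depend only on the Gram--Schmidt vectors, and standard and effective LLL produce identical GS vectors. You flesh out both halves of this claim---arguing via the identical swap sequence that the GS vectors coincide, and then tracing through why nearest-plane and sphere enumeration return the same lattice point---and you also address the integer-vector versus lattice-point distinction and tie-breaking, which the paper leaves implicit; but the underlying idea is the same.
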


\begin{proof}
This is obvious since SIC and sphere decoding only need the GS vectors and since standard and
effective LLL give exactly the same GS vectors.
\end{proof}

\subsection{Transformation to Standard LLL-Reduced Basis}


On the other hand, ZF does require the condition of full size reduction. One can easily transform
an effective LLL-reduced basis into a fully reduced one. To do so, we simply perform size
reductions at the end to make the other coefficients $|\Re(\mu_{i,j})| \leq 1/2$ and
$|\Im(\mu_{i,j})| \leq 1/2$, for $1\leq j<i-1$, $2<i\leq n$
\cite{Storjohann,Howgrave-GrahamThesis}. This is because, once again, such operations have no
impact on the Lov\'asz condition. Full size reduction costs $O(n^3)$ arithmetic operations. The
analysis in the following subsection will show the complexity of this version of LLL reduction is
on the same order of that of effective LLL reduction. In other words, it has lower theoretic
complexity than the standard LLL algorithm.

There are likely multiple bases of the lattice $L$ that are LLL-reduced. For example, a basis
reduced in the sense of Korkin-Zolotarev (KZ) is also LLL-reduced. Proposition \ref{prop2} shows
that this version results in the same reduced basis as the LLL algorithm.

\begin{prop}\label{prop2}
Fully size-reducing an effectively LLL-reduced basis gives {\it exactly the same} basis as the
standard LLL algorithm.
\end{prop}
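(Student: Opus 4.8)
The plan is to show that the standard LLL algorithm, running on the same input basis, produces intermediate bases whose basis vectors are related to those produced by effective LLL by a unimodular transformation that is unit lower triangular, and then argue this transformation is trivial after a full size reduction at the end. The key observation is that the two algorithms execute exactly the same sequence of swaps (since, as already noted, size-reducing $\mathbf{b}_k$ against $\mathbf{b}_{k-2},\dots,\mathbf{b}_1$ does not change any GS vector, hence does not affect any Lov\'asz test or any swap decision, which depends only on $\mu_{k,k-1}$ and the $\|\hat{\mathbf{b}}_i\|^2$). So the variable $k$ follows the same trajectory, and the two runs terminate on bases spanning the same lattice with identical Gram--Schmidt data $\{\hat{\mathbf{b}}_i\}$ and identical $\mu_{k,k-1}$.

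First I would set up the invariant: at every matching point of the two executions, the effective-LLL basis $\mathbf{B}^{\mathrm{eff}}$ and the standard-LLL basis $\mathbf{B}^{\mathrm{std}}$ satisfy $\mathbf{B}^{\mathrm{std}} = \mathbf{B}^{\mathrm{eff}}\mathbf{U}$ for some unit lower-triangular $\mathbf{U}$ with Gaussian-integer entries. This holds initially with $\mathbf{U}=\mathbf{I}$; it is preserved by a pairwise size reduction of $(k,k-1)$ because both algorithms do the identical operation; it is preserved by a swap because conjugating a unit lower-triangular $\mathbf{U}$ by the transposition of indices $k-1,k$ keeps it unit \emph{lower} triangular \emph{provided} $u_{k,k-1}$ is handled correctly --- and here one uses that immediately before each swap both algorithms have just size-reduced $(k,k-1)$, so in fact the relevant off-diagonal discrepancy in the $(k,k-1)$ slot is zero at that moment; and the full size reduction (Lines 9--10 of standard LLL against $\mathbf{b}_{k-2},\dots,\mathbf{b}_1$) only modifies $\mathbf{U}$ in column entries that are already below the diagonal. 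Then I would argue that since $\{\hat{\mathbf{b}}_i\}$ agree, the transformation $\mathbf{U}$ must be exactly the unit lower-triangular matrix that realizes the full size reduction of $\mathbf{B}^{\mathrm{eff}}$: both $\mathbf{B}^{\mathrm{std}}$ and the fully size-reduced $\mathbf{B}^{\mathrm{eff}}$ have the same GS vectors and have all $\mu_{i,j}$ reduced into the fundamental domain $|\Re(\mu_{i,j})|,|\Im(\mu_{i,j})|\le 1/2$, and such a representative is unique once the GS vectors are fixed (any two differ by a unit lower-triangular Gaussian-integer matrix, whose sub-diagonal entries must then all be zero to keep both $\mu$'s reduced). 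Hence $\mathbf{B}^{\mathrm{std}}$ equals the fully size-reduced $\mathbf{B}^{\mathrm{eff}}$, which is the claim.

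The main obstacle I anticipate is the bookkeeping around the swap step: one must verify carefully that the unit \emph{lower}-triangular structure of $\mathbf{U}$ is not destroyed when indices $k-1$ and $k$ are interchanged, and this is exactly where the hypothesis ``both algorithms size-reduce $(k,k-1)$ before every Lov\'asz test'' is indispensable --- without it one could pick up a nonzero entry above the diagonal of $\mathbf{U}$. A secondary point needing care is that after a swap the standard algorithm sets $k\leftarrow\max(k-1,2)$ and will later revisit and re-size-reduce vectors; one should check the invariant is maintained through these decrements and not just on the forward pass. Once these are nailed down, the uniqueness-of-reduced-representative argument is routine: it is the complex analogue of the standard fact that a size-reduced basis is determined by its Gram--Schmidt vectors, using that $\lceil\cdot\rfloor$ rounds real and imaginary parts independently into $(-1/2,1/2]$.
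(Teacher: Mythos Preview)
Your final uniqueness argument --- two bases of the same lattice with identical Gram--Schmidt vectors and fully size-reduced $\mu$'s must coincide --- is exactly the content of the paper's proof, which organises it as an induction on the row index $k$ of $[\mu_{i,j}]$. So the core idea is the same and your argument is correct.

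Two remarks, however. First, a minor slip: with the paper's conventions (columns as basis vectors, size reduction $\mathbf{b}_k\leftarrow\mathbf{b}_k-\lceil\mu_{k,l}\rfloor\mathbf{b}_l$ with $l<k$), the elementary matrix has its nonzero off-diagonal entry at position $(l,k)$, so the running transformation $\mathbf{U}$ is unit \emph{upper} triangular, not lower. This does not affect the substance.

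Second, and more to the point, your entire invariant-tracking through the execution is unnecessary. Once you know the two runs perform the same sequence of swaps (which you correctly justify), the terminal bases automatically share the same $\hat{\mathbf{B}}$; and any two bases of the same lattice with the same $\hat{\mathbf{B}}$ are related by a unit upper-triangular Gaussian-integer matrix, simply because $\mathbf{U}=((\bm{\mu}^{(1)})^T)^{-1}(\bm{\mu}^{(2)})^T$. So the ``main obstacle'' you flag --- checking that swaps preserve the triangular structure of $\mathbf{U}$ --- never needs to be confronted: the paper bypasses it entirely by comparing only the final bases and running your uniqueness step directly. Your proof works, but the bookkeeping in the middle can be deleted.
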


\begin{proof}
It is sufficient to prove the GSO coefficient matrix $[\mu_{i,j}]$ is the same, since
$\mathbf{B}=\mathbf{\hat{B}}[\mu_{i,j}]^T$ and since $\mathbf{\hat{B}}$ is not changed by size
reduction. We prove it by induction. Suppose the new version has the same coefficients $\mu_{i,j}$,
$j<i$ when $i=2, ..., k-1$. Note that this is obviously true when $i=2$.

When $i=k$ and $l=k-2$, the new version makes $|\Re(\mu_{k,k-2})|<1/2$ and $|\Im(\mu_{k,k-2})|<1/2$
at the end by subtracting its integral part so that
\[|\Re(\mu_{k,k-2}-\lceil\mu_{k,k-2}\rfloor)|<1/2, \text{ } |\Im(\mu_{k,k-2}-\lceil\mu_{k,k-2}\rfloor)|<1/2.\] The standard LLL algorithm achieves this in a number of
iterations. Yet the sum of integers subtracted must be equal. The other coefficients will be
updated as \[\mu_{k,j} := \mu_{k,j} - \lceil\mu_{k,k-2}\rfloor\mu_{k-2,j}, \quad j=1,...,k-3,\]
which will remain the same since coefficients $\mu_{k-2,j}$ are assumed to be the same.

Clearly, the argument can be extended to the case $i=k$ and $l=k-3, \cdots, 1$. That is, the new
version also has the same coefficients $\mu_{k,j}$, $j<k$. This completes the proof.
\end{proof}

\subsection{$O(n^3 \log n)$ Complexity}








\begin{prop}
Under the i.i.d. complex normal model of the basis $\mathbf{B}$, the average complexity of
effective LLL is bounded by $\mathcal{C}_1$ in (\ref{totalcost1}) which is $O(n^3\log n)$.
\end{prop}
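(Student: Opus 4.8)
The plan is to follow the same two-step template used for standard LLL in the previous section, but now account for the fact that effective LLL omits the expensive full size-reduction loop (Lines 9--10 of Algorithm \ref{alg:LLL}), so that the per-iteration cost drops. First I would invoke the earlier Proposition bounding $E[K^-]$ (and hence $E[K]$) under the i.i.d.\ complex normal model: since effective LLL performs \emph{exactly} the same sequence of Lov\'asz tests and swaps as standard LLL (size-reducing against non-consecutive vectors has no effect on $\|\mathbf{\hat b}_i\|$, the $\mu_{i,i-1}$ used in the test, or the swap), the counts $K^+$ and $K^-$ are identical to those in (\ref{KBound}), and in particular $E[K^-] \le \frac{n(n-1)}{2}\log 2n$ (up to the additive lower-order terms already recorded in (\ref{KminusBound})) and $E[K^+] \le E[K^-] + (n-1)$.

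Next I would tally the per-iteration flop count, which is precisely the quantity $\mathcal{C}_1$ already isolated in (\ref{totalcost1}): the initial GSO at $2n^3$ flops; GSO update on a swap at $\le 6n-5$ flops, incurred $K^-$ times; pairwise size reduction of $(k,k-1)$ at $\le 4n-2$ flops and the Lov\'asz test at $3$ flops, each incurred on every one of the $K^++K^-$ iterations. Crucially, effective LLL incurs \emph{none} of the $\mathcal{C}_2 = 3n^2 K^+$ contribution from full size reduction, because those lines are deleted. Hence the total expected cost is bounded by $E[\mathcal{C}_1]$, and substituting the iteration bounds exactly as in the chain of inequalities (\ref{totalcost1}) gives $E[\mathcal{C}_1] \le 7n^3\log 2n + 2n^3 = O(n^3\log n)$.

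The only genuine subtlety — and the place I would be most careful — is justifying that the expectation of a \emph{product} (number of iterations times per-iteration cost) factors the way (\ref{totalcost1}) implicitly assumes. Here it does, because the per-iteration costs $6n-5$, $4n-2$, $3$ are deterministic upper bounds independent of the basis and of which iteration we are on; so $\mathcal{C}_1 \le (6n-5)K^- + (4n+1)(K^++K^-) + 2n^3$ holds pointwise, and taking expectations and then inserting $E[K^-]$, $E[K^+]$ from the Proposition is legitimate with no independence assumption needed. I would also note in passing the point already made in the excerpt: appending a single $O(n^3)$ full size-reduction pass \emph{at the end} (to convert to a standard LLL-reduced basis, per Proposition \ref{prop2}) adds only $O(n^3)$ and hence does not change the $O(n^3\log n)$ order — so the same bound covers that variant too. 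The remaining steps are the routine arithmetic already displayed in (\ref{totalcost1}), which I would not reproduce.
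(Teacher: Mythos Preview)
Your proposal is correct and follows essentially the same approach as the paper's own proof: observe that effective LLL executes exactly the same sequence of Lov\'asz tests and swaps as standard LLL (so the iteration bound from the earlier Proposition carries over unchanged), and then note that each iteration costs only $O(n)$ flops because the full size-reduction loop is absent, leaving precisely the $\mathcal{C}_1$ tally of (\ref{totalcost1}). Your write-up is in fact more careful than the paper's---the paper's proof is a three-sentence sketch, while you explicitly justify why the per-iteration bounds are deterministic and hence why taking expectations commutes with the pointwise inequality---but the underlying argument is identical.
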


\begin{proof}
Since the number of iterations is the same, and since each iteration of effective LLL costs $O(n)$
arithmetic operations, the total computation cost is $O(n^3\log n)$. More precisely, the effective
LLL consists of the following computations: initial GSO, updating the GSO coefficients during the
swapping step, pairwise size reduction, and testing the Lov\'asz condition. Therefore, the total
cost is exactly bounded by $\mathcal{C}_1$ in (\ref{totalcost1}).
\end{proof}

To obtain a fully reduced basis, we further run pairwise size reduction for $l = k-2$ down to 1 for
each $k = 3,\cdots,n$. The additional number of flops required is bounded by
\begin{equation}\label{SRcost}
\begin{split}
  \sum_{k=3}^n{\sum_{l=1}^{k-2}{(2n+2l)}} &= \sum_{k=3}^n{{[2n(k-2)+(k-1)(k-2)]}}\\
  &=\frac{4}{3}n(n-1)(n-2) \leq \frac{4}{3}n^3.
\end{split}
\end{equation}
Obviously, the average complexity is still $O(n^3\log n)$.

Again, since each complex arithmetic operation on average requires four real arithmetic operations,
the net saving in complexity due to complex effective LLL is about 50\%.




In Fig. \ref{fig:ELLL-time}, we show the theoretic upper bounds and experimental results for
effective and standard LLL algorithms. Clearly there is room to improve the analysis. This is
because our work is in fact a blend of worst- and average-case analysis, and the resultant
theoretic bound is unlikely to be sharp. But nonetheless, the experimental data exhibit cubic
growth with $n$, thereby supporting the $O(n^3 \log n)$ bound. On the other hand, surprisingly, the
experimental complexity of standard LLL reduction is not much higher than that of effective LLL. We
observed that this is because of the small probability to execute size reduction with respect to
nonconsecutive vectors (Lines 9-10 of the standard LLL algorithm), which were thought to dominate
the complexity.



\begin{figure}[t]
\centering\centerline{\epsfig{figure=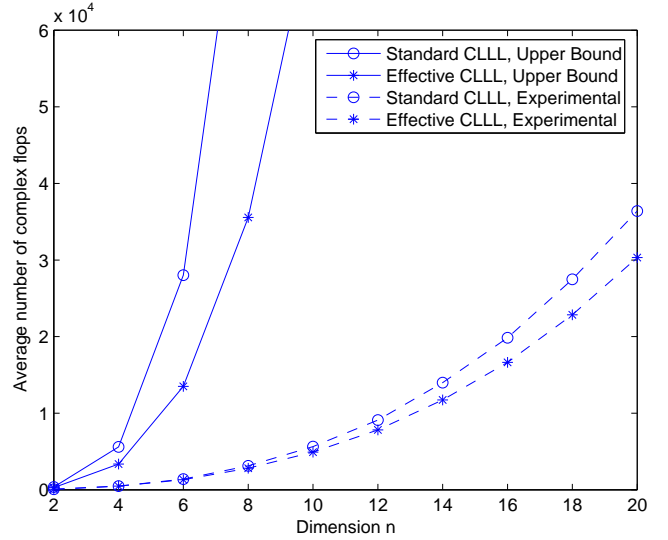,width=10cm}}

\caption{Average number of complex flops for effective CLLL reduction with $\delta = 3/4$ for the
i.i.d. normal basis model.}


\label{fig:ELLL-time}
\end{figure}

%
%

\section{Fixed-Complexity Implementation}

Although the average complexity of the LLL algorithm is polynomial, it is important to recognize
that its complexity is in fact variable due to its sequential nature. The worst-case complexity
could be very large \cite{Jalden08}, which may severely limit the throughput of the decoder. In
this Section, we propose two fixed-complexity structures that are suitable to approximately
implement the LLL algorithm in hardware. One is fixed-complexity effective LLL, while the other is
fixed-complexity LLL-deep. The structures are based on parallel versions of the LLL and LLL-deep
algorithms, respectively. The parallel versions exhibit fast convergence, which allow to fix the
number of iterations without incurring much quality degradation.

\subsection{Fixed-Complexity Effective LLL}

Recently, a fixed-complexity LLL algorithm was proposed in \cite{VetterLLL}. It resembles the
parallel `even-odd' LLL algorithm earlier proposed in \cite{Villard} (see also \cite{Wang09-EOLLL}
for a systolic-array implementation) and a similar algorithm in \cite{Mow:Thesis}. It is well known
that the LLL reduction can be achieved by performing size reduction and swapping in any order.
Therefore, the idea in \cite{VetterLLL} was to run a super-iteration where the index $k$ is
monotonically incremented from 2 to $n$, and repeat this super-iteration until the basis is
reduced. This is slightly different from the `even-odd' LLL \cite{Villard}, where the
super-iteration is performed for even and odd indexes $k$ separately.

Here, we extend this fixed-complexity structure to effective LLL, which is a truncated version of
the parallel effective LLL described in Algorithm \ref{alg:FixedELLL} (one can easily imagine an
`even-odd' version of this algorithm). The index $k$ is never reduced in a super-iteration. Of
course, one can further run full size reduction to make the basis reduced in the sense of LLL. It
is easy to see that this algorithm converges by examining the LLL potential function. To cater for
fixed-complexity implementation, we run a sufficiently large but fixed number of super-iterations.

How many super-iterations should we run? A crude estimate is $O(n^2\log n)$, i.e., the same as that
for standard LLL. Since there is at least one swap within a super-iteration (otherwise the
algorithm terminates), the number of super-iterations is bounded by $O(n^2\log n)$ (this is the
approach used in \cite{VetterLLL}). However, this approach might be pessimistic, as up to $n-1$
swaps may occur in one super-iteration. Next, we shall argue that on average it is sufficient to
run $O(n\log n)$ super-iterations in order to obtain a good basis. Accordingly, since each
super-iteration costs $O(n^2)$, the overall complexity is $O(n^3 \log n)$.

\begin{algorithm}[t]
\caption{\quad Parallel Effective LLL Algorithm} \label{alg:FixedELLL}
{\bf{Input}:} A basis $\mathbf{B} = [\mathbf{b}_1, ... \mathbf{b}_n]$\\
{\bf{Output}:} Effectively LLL-reduced basis
\begin{algorithmic}[1]

\STATE compute GSO $\mathbf{B}=\mathbf{\hat{B}} [\mu_{i,j}]^T$

\WHILE{any swap is possible}

\FOR{$k = 2, 3, ...,n$}

\STATE size-reduce $\mathbf{b}_k$ against $\mathbf{b}_{k-1}$

\IF{$\|\mathbf{\hat{b}}_k + \mu_{k,k-1}\mathbf{\hat{b}}_{k-1}\|^2 < \delta
  \|\mathbf{\hat{b}}_{k-1}\|^2$}

\STATE swap $\mathbf{b}_k$ and $\mathbf{b}_{k-1}$ and update GSO

\ENDIF

\ENDFOR

\ENDWHILE
\end{algorithmic}
\end{algorithm}

\begin{prop}
Let $c=\frac{1}{\delta^2(\delta-\frac{1}{2})}$ for complex LLL and
$c=\frac{1}{\delta^2(\delta-\frac{1}{4})}$ for real LLL. On average the fixed-complexity effective
LLL finds a short vector with length
\begin{equation}\label{PELLL-b1}
    \|\mathbf{b}_1\| \leq \frac{c^{(n-1)/4}}{\sqrt{\delta}}\text{det}^{1/n}L
\end{equation}
after $O(n\log n)$ super-iterations.
\end{prop}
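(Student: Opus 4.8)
The plan is to track the first Gram–Schmidt length $\|\hat{\mathbf{b}}_1\|^2$ across super-iterations and relate it to $\det^{1/n}L$ via the potential $\mathcal{D}$, together with a bound on the number of super-iterations that suffices for $\mathbf{b}_1$ to stop shrinking. First I would observe that after a super-iteration in which $\mathbf{b}_1$ participates in a swap with $\mathbf{b}_2$, the new $\hat{\mathbf{b}}_1$ satisfies $\|\hat{\mathbf{b}}_1^{\text{new}}\|^2 = \|\hat{\mathbf{b}}_2 + \mu_{2,1}\hat{\mathbf{b}}_1\|^2 < \delta\|\hat{\mathbf{b}}_1\|^2$, so $\|\hat{\mathbf{b}}_1\|^2$ is monotone non-increasing in the super-iteration index and strictly shrinks by a factor $<\delta$ whenever the Lov\'asz test at $k=2$ fails. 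Hence, once no swap at position $k=2$ occurs, we have $\|\hat{\mathbf{b}}_1\|^2 = \|\mathbf{b}_1\|^2$ stable, and at that point $\|\hat{\mathbf{b}}_2\|^2 \geq (\delta - |\mu_{2,1}|^2)\|\mathbf{b}_1\|^2 \geq (\delta - 1/2)\|\mathbf{b}_1\|^2$ (for complex LLL; $\delta - 1/4$ for real), i.e. a local Lov\'asz-type inequality at the first coordinate.

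Next I would iterate this local inequality down the diagonal to get the global bound. The key point is that once the first coordinate is stable, the parallel algorithm acts on $\mathbf{b}_2,\dots,\mathbf{b}_n$ as a reduction problem of dimension $n-1$, and recursively each $\|\hat{\mathbf{b}}_i\|^2$ eventually satisfies $\|\hat{\mathbf{b}}_{i+1}\|^2 \geq (\delta - 1/2)\|\hat{\mathbf{b}}_i\|^2$ once position $i+1$ stabilizes. This is exactly the situation in which one derives the classical LLL bound: $\|\hat{\mathbf{b}}_i\|^2 \geq (\delta - 1/2)^{i-1}\|\mathbf{b}_1\|^2$, hence $\det L = \prod_i \|\hat{\mathbf{b}}_i\| \geq \|\mathbf{b}_1\|^n (\delta-1/2)^{n(n-1)/4}$, which rearranges to $\|\mathbf{b}_1\| \leq (\delta-1/2)^{-(n-1)/4}\det^{1/n}L$. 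To land on the stated constant $c = 1/(\delta^2(\delta-1/2))$ with the extra $1/\sqrt{\delta}$ factor, I would be more careful: size reduction against $\mathbf{b}_{k-1}$ plus one failed test gives the inequality $\|\hat{\mathbf{b}}_k\|^2 \geq (\delta - 1/2)\|\hat{\mathbf{b}}_{k-1}\|^2$, but because the parallel sweep may not fully reduce before moving on, the effective per-step factor degrades to $c^{-1} = \delta^2(\delta-1/2)$; the leftover $\|\mathbf{b}_1\|^2 \leq \delta^{-1}(\ldots)$ comes from the last un-tested swap at $k=2$. I would verify this bookkeeping by comparing the GS vector after a sweep-truncated (rather than fully converged) pass to the fully reduced one.

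The remaining and genuinely probabilistic ingredient — and the main obstacle — is showing that $O(n\log n)$ super-iterations suffice \emph{on average}. For this I would use the potential $\mathcal{D} = \prod_{i=1}^{n-1}\|\hat{\mathbf{b}}_i\|^{2(n-i)}$ as in Section III: every super-iteration that is not the last contains at least one swap, and each swap multiplies $\mathcal{D}$ by a factor $<\delta$, so the number of super-iterations is at most $1 + \log_{1/\delta}(\mathcal{D}_0/\mathcal{D}_{\text{lower}})$; taking expectations and reusing the bounds $E[\log\mathcal{D}_0] \leq \frac{n(n-1)}{2}\log 2n$ and $E[\log a] \geq -1$ from (\ref{D0})–(\ref{Eloga}) gives $E[\#\text{super-iterations}] \leq \frac{n(n-1)}{2}(\log 2n + 1) + 1$ — which is $O(n^2\log n)$, not $O(n\log n)$. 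The essential improvement is that a single super-iteration performs up to $n-1$ swaps, so I would argue that the \emph{number of super-iterations containing only one swap} is small: if a super-iteration contains just one swap at position $k$, that swap propagates and on the following sweep typically triggers swaps at neighbouring positions, so "slow" super-iterations are rare. Made precise, this is the claim that the $n-1$ swaps are spread roughly evenly across super-iterations, giving $E[\#\text{super-iterations}] = O\!\big(\frac{n^2\log n}{n}\big) = O(n\log n)$. I expect this averaging/spreading argument to be the crux of the proof and the place where the i.i.d. complex normal model (via the concentration of the $\chi^2$-distributed $\|\hat{\mathbf{b}}_i\|^2$) is actually used; the diagonal-telescoping part is deterministic and routine.
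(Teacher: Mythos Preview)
Your approach has a genuine gap in the super-iteration count. You correctly observe that the LLL potential $\mathcal{D}$ only yields $O(n^2\log n)$ super-iterations, and then propose to recover the missing factor of $n$ by a heuristic ``spreading'' argument (swaps are distributed roughly evenly across sweeps, so divide by $n$). This is not how the paper proceeds, and it is not clear how to make such an argument rigorous: a super-iteration can legitimately contain a single swap, and nothing in the i.i.d.\ Gaussian model directly controls how swaps cluster across sweeps.

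The paper instead uses a \emph{different} potential, imported from Heckler's analysis of even--odd parallel LLL. Define
\[
v(i) \;=\; \frac{\prod_{j=1}^{i}\|\hat{\mathbf{b}}_j\|^2}{c^{\,i(n-i)/2}\,(\det L)^{2i/n}}, \qquad v_{\max}=\max_{1\le i\le n} v(i).
\]
The key lemma is: whenever $v_{\max}>1/\delta$, every index $i$ with $v(i)>\delta\, v_{\max}$ (in particular the maximising index) satisfies the swapping condition at position $i{+}1$, and after the swap $v(i)$ is multiplied by a factor $<\delta$ while the remaining ratios do not increase. The paper's only additional work is to check that this swap still fires within the monotone sweep of Algorithm~\ref{alg:FixedELLL}: earlier swaps in the same sweep can only increase $\|\hat{\mathbf{b}}_i\|^2$, so the condition is preserved when $k$ reaches $i{+}1$. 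Consequently $v_{\max}$ itself shrinks by a factor $<\delta$ \emph{per super-iteration}, not per swap. Bounding the initial value by $v_{\max}\le (A/a)^n$ gives $O(n\log(A/a))$ super-iterations; the only probabilistic input is $E[\log(A/a)]=O(\log n)$, already available from Section~III, so the concentration argument you anticipate is not needed.

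This choice of potential also dissolves your difficulty with the constants. Once $v_{\max}\le 1/\delta$, in particular $v(1)\le 1/\delta$, which unwinds directly to $\|\mathbf{b}_1\|^2 \le c^{(n-1)/2}(\det L)^{2/n}/\delta$, i.e.\ exactly the stated bound with the factor $1/\sqrt{\delta}$. Your recursive ``stabilise position $1$, then recurse on positions $2,\ldots,n$'' scheme is neither needed nor safe here: a later swap at position $3$ alters $\hat{\mathbf{b}}_2$ and can re-trigger a swap at position $2$ in the next sweep, so positions do not stabilise in a clean left-to-right order.
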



\begin{proof}
The proof is an extension of the analysis of `even-odd' LLL in \cite{Heckler}, with two
modifications.

Following \cite{Heckler}, define the ratios
\begin{equation}\label{ratiov}
    v(i) = \frac{\prod_{j=1}^{i}{\|\hat{\mathbf{b}}_j\|^2}}{c^{\frac{i(n-i)}{2}}(\det
    L)^{\frac{2i}{n}}}.
\end{equation}
Let $v_{\text{max}}=\max{\{v(i), 1\leq i \leq n\}}$. Following \cite{Heckler} one can prove that if
$v_{\max}>\frac{1}{\delta}$ and $v(i)>\delta v_{\max}$, then the swapping condition is satisfied,
i.e., $\|\hat{\mathbf{b}}_{i+1}\|^2 \leq (\delta-|\mu_{i+1,i}|^2) \|\hat{\mathbf{b}}_i\|^2$. Thus,
after swapping, any such $v(i)$ will be decreased by a factor less than $\delta$; all other ratios
do not increase. Hence $v_{\max}$ will be decreased by a factor less than $\delta$.

The first modification is bounding $v_{\max}$ in the beginning. We can see that in the beginning
$v_{\max}\leq A^n/\text{det}^{2}L \leq (A/a)^n$ (recall $A=\max_{i=1}^{n} \|\hat{\mathbf{b}}_i\|^2$
and $a=\min_{i=1}^{n} \|\hat{\mathbf{b}}_i\|^2$)\footnote{The bound $v_{\max}\leq A^n$ used in
\cite{Heckler} does not necessarily hold here since $\det L$ may be less than 1 for real (complex)
valued bases.}.

Secondly, we need to check whether the swapping condition $\|\hat{\mathbf{b}}_{i+1}\|^2 \leq
(\delta-|\mu_{i+1,i}|^2) \|\hat{\mathbf{b}}_i\|^2$ remains satisfied after $k$ goes from 2 to $i$.
It turns out to be true. This is because $\|\hat{\mathbf{b}}_i\|^2$ will not decrease; in fact, the
new $\|\hat{\mathbf{b}}_i\|^2$ increases by a factor larger than $1/\delta$ if ${\mathbf{b}}_{i-1}$
and ${\mathbf{b}}_i$ are swapped. Thus, ${\mathbf{b}}_{i}$ and ${\mathbf{b}}_{i+1}$ will always be
swapped regardless of the swap between ${\mathbf{b}}_{i-1}$ and ${\mathbf{b}}_i$, and accordingly
$v(i)$ will be decreased.

Thus, one has $v_{\max}\leq \frac{1}{\delta}$ after $O\left(n\log A - 2\log(\det L)\right) \leq
O(n\log A/a)$ iterations. On average this is $O(n\log n)$.

In particular, $v(1)\leq 1/ \delta$ implies that $\mathbf{b}_1$ is short, with length bounded in
(\ref{PELLL-b1}).
\end{proof}

\begin{rem} This analysis also applies to the fixed-complexity LLL proposed in \cite{VetterLLL} (this is obvious since
size reduction does not change GSO).
\end{rem}

\begin{rem} Compared to (\ref{LLL-bounds}) for standard sequential LLL, the approximation factor in
(\ref{PELLL-b1}) is larger by a coefficient $(\frac{1}{\delta})^{n/2}$. Yet we can make this factor
very small by choosing $\delta \lessapprox 1$. For example, if $\delta = 0.99$, then
$(\frac{1}{\delta})^{n/2} < 1.5$ for $n$ up to 80.
\end{rem}

\begin{rem} In practice, one can obtain a good basis after $n$ super-iterations. This will be
confirmed by simulation later in this section.
\end{rem}

\begin{rem}
Although we have only bounded the length of $\mathbf{b}_1$, this suffices for some applications in
lattice decoding. For example, the embedding technique (a.k.a. augmented lattice reduction
\cite{Luzzi10}) only requires a bound for the shortest vector.
\end{rem}

%
%
%
%

\subsubsection{Relation to PMLLL} Another variant of the LLL algorithm, PMLLL, was proposed in \cite{Mow:Thesis}, which repeats
two steps: one is a series of swapping to satisfy the Lov\'asz condition (even forgetting about
$|\mu_{k,k-1}|\leq 1/2$), the other is size reduction to make $|\mu_{k,k-1}|\leq 1/2$ for
$k=2,...,n$. This variant is similar to parallel effective LLL. However, $k$ does not necessarily
scan from $2$ to $n$ monotonically in PMLLL.

\subsection{Fixed-Complexity LLL-Deep}

Here, we propose another fixed-complexity structure to approximately implement LLL-deep (and,
accordingly, LLL). More precisely, we apply sorted GSO and size reduction alternatively. This
structure is closely related to V-BLAST.

The sorted GSO relies on the modified GSO \cite{golub}. At each step, the remaining columns of
$\mathbf{B}$ are projected onto the orthogonal complement of the linear space spanned by the GS
vectors already obtained. In sorted GSO, one picks the shortest GS vector at each step, which
corresponds to the sorted QR decomposition proposed by Wubben et al \cite{wubben}\footnote{Note
that this is contrary to the well known pivoting strategy where the longest Gram-Schmidt vector is
picked at each step so that $\|\mathbf{\hat{b}}_1\| \geq \|\mathbf{\hat{b}}_2\| \geq \cdots \geq
\|\mathbf{\hat{b}}_n\|$ \cite{golub}.} (we will use the terms sorted GSO and sorted QR
decomposition interchangeably). Algorithm {\ref{alg:SQR}} describes the process of sorted GSO.

For $i=1,...,n$, let $\pi_i$ denote the projection onto the orthogonal complement of the subspace
spanned by vectors $\mathbf{b}_1,...,\mathbf{b}_{i-1}$. Then the sorted GSO has the following
property:
$\mathbf{b}_1$ is the shortest vector among $\mathbf{b}_1$, $\mathbf{b}_2$, ..., $\mathbf{b}_n$;
$\pi_2(\mathbf{b}_2)$ is the shortest among $\pi_2(\mathbf{b}_2)$, ..., $\pi_2(\mathbf{b}_n)$; and
so on~\footnote{However, it is worth pointing out that,  sorted Gram-Schmidt orthogonalization does
not guarantee $\|\mathbf{\hat{b}}_1\| \leq \|\mathbf{\hat{b}}_2\| \leq \cdots \leq
\|\mathbf{\hat{b}}_n\|$. It is only a greedy algorithm that hopefully makes the first few
Gram-Schmidt vectors not too long, and accordingly, the last few not too short. The term ``sorted"
is probably imprecise, because the Gram-Schmidt vectors are not sorted in length at all.}.

Sorted GSO tends to reduce $\max\{\|\mathbf{\hat{b}}_1\|, \|\mathbf{\hat{b}}_2\|, \cdots,
\|\mathbf{\hat{b}}_n\|\}$. In fact, using proof by contradiction, we can show that sorted GSO
minimizes $\max\{\|\mathbf{\hat{b}}_1\|, \|\mathbf{\hat{b}}_2\|, \cdots, \|\mathbf{\hat{b}}_n\|\}$.
This is in contrast (but also very similar in another sense) to V-BLAST which maximizes
$\min\{\|\mathbf{\hat{b}}_1\|, \|\mathbf{\hat{b}}_2\|, \cdots, \|\mathbf{\hat{b}}_n\|\}$
\cite{vblast:conf}.

%
%
%
%
%

\begin{algorithm}[t]
\caption{\quad Sorted GSO} \label{alg:SQR}
{\bf{Input}:} A basis $\mathbf{B} = [\mathbf{b}_1, ... \mathbf{b}_n]$\\
{\bf{Output}:} GSO for the sorted basis
\begin{algorithmic}[1]

\STATE let $\mathbf{\hat{B}} = \mathbf{B}$

\FOR{$i = 1, 2, ..., n$}

\STATE $k=\arg\min_{i \leq m \leq n}{\|\mathbf{\hat{b}}_m\|}$

\STATE exchange the $i$ and $k$-th columns of $\mathbf{\hat{B}}$

\FOR{$j = i+1, ..., n$}

\STATE compute the coefficient $\mu_{ij} =
\frac{\langle\mathbf{\hat{b}}_j,\mathbf{\hat{b}}_i\rangle}
  {\|\mathbf{\hat{b}}_i\|^2}$

\STATE update $\mathbf{\hat{b}}_j := \mathbf{\hat{b}}_j - \mu_{ij}\mathbf{\hat{b}}_i$

\STATE \%\% \textit{joint sorted GSO and size reduction} \%\%

\STATE \% $\mathbf{{b}}_j := \mathbf{{b}}_j - \lceil \mu_{ij}\rfloor \mathbf{{b}}_i$

\ENDFOR

\ENDFOR

\end{algorithmic}
\end{algorithm}


%
%
%
%
%
%
%



Following sorted GSO, it is natural to define the following notion of lattice reduction:
\begin{equation}\label{PDeep}
\begin{split}
  |\Re(\mu_{i,j})| &\leq 1/2, \quad |\Im(\mu_{i,j})|\ \leq 1/2, \quad \text{for} \quad 1 \leq j < i \leq n; \\
  \|\pi_i(\mathbf{b}_i)\| &=\min\{\|\pi_i(\mathbf{b}_i)\|,\|\pi_i(\mathbf{b}_{i+1})\|,...,
  \|\pi_i(\mathbf{b}_n)\|\},\\
  &\quad \text{for} \quad 1\leq i < n.
\end{split}
\end{equation}
In words, the basis is size-reduced and sorted in the sense of sorted GSO. Such a basis obviously
exists, as a KZ-reduced basis satisfies the above two conditions \cite{lagarias}. In fact, KZ
reduction searches for the shortest vector in the lattice with basis $[\pi_i(\mathbf{b}_i)$, ...,
$\pi_i(\mathbf{b}_n)]$.

The sorting condition is in fact stronger than the Lov\'asz condition. Since $\mathbf{\hat{b}}_i +
\mu_{i,i-1} \mathbf{\hat{b}}_{i-1}=\pi_{i-1}(\mathbf{b}_i)$ and
$\mathbf{\hat{b}}_{i-1}=\pi_{i-1}(\mathbf{b}_{i-1})$, the Lov\'asz condition with $\delta=1$ turns
out to be
\begin{equation}\label{}
  \|\pi_{i-1}(\mathbf{b}_{i-1})\| \leq \|\pi_{i-1}(\mathbf{b}_i)\|, \text{ for } 1<i\leq n,
\end{equation}
which can be rewritten as
\begin{equation}
  \|\pi_i(\mathbf{b}_i)\|=\min\{\|\pi_i(\mathbf{b}_i)\|,\|\pi_i(\mathbf{b}_{i+1})\|\}, \quad 1\leq i <n.
\end{equation}
Obviously, this is weaker than the sorting condition for all Gram-Schmidt vectors. Therefore, the
reduction notion defined above is stronger than LLL reduction even with $\delta=1$, but is weaker
than KZ reduction.

Meanwhile, when LLL-deep terminates, the following condition is satisfied:
\begin{equation}\label{}
\delta\|\mathbf{\hat{b}}_i\|^2 \leq \sum_{j=i}^k\mu_{k,j}^2\|\mathbf{\hat{b}}_j\|^2 \quad \text{for
} i<k\leq n.
\end{equation}
If $\delta = 1$, this is equivalent to
$\|\pi_i(\mathbf{b}_i)\|=\min\{\|\pi_i(\mathbf{b}_i)\|,\|\pi_i(\mathbf{b}_{i+1})\|,...,
  \|\pi_i(\mathbf{b}_n)\|\}$. Therefore, we have

\begin{prop}
The notion of lattice reduction defined in (\ref{PDeep}) is equivalent to LLL-deep with $\delta=1$
and unbounded window size.
\end{prop}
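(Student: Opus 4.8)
The plan is to characterise ``LLL-deep-reduced with $\delta=1$'' as a fixed point of Algorithm \ref{alg:LLLdeep} and to match it against the two conditions in (\ref{PDeep}) one by one, the bridge being the standard identity between the projections $\pi_i$ and the Gram--Schmidt data. First I would record that geometric fact: since $\mathbf{\hat{b}}_1,\dots,\mathbf{\hat{b}}_{i-1}$ span the same subspace as $\mathbf{b}_1,\dots,\mathbf{b}_{i-1}$ while $\mathbf{\hat{b}}_i,\dots,\mathbf{\hat{b}}_n$ are orthogonal to it, the expansion $\mathbf{b}_k=\sum_{j=1}^{k}\mu_{k,j}\mathbf{\hat{b}}_j$ (with $\mu_{k,k}=1$) projects to $\pi_i(\mathbf{b}_k)=\sum_{j=i}^{k}\mu_{k,j}\mathbf{\hat{b}}_j$ for $i\le k$. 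Taking norms and using the orthogonality of the $\mathbf{\hat{b}}_j$ gives $\|\pi_i(\mathbf{b}_k)\|^2=\sum_{j=i}^{k}|\mu_{k,j}|^2\|\mathbf{\hat{b}}_j\|^2$, and in particular $\pi_i(\mathbf{b}_i)=\mathbf{\hat{b}}_i$, so $\|\pi_i(\mathbf{b}_i)\|^2=\|\mathbf{\hat{b}}_i\|^2$. Hence the inequality $\sum_{j=i}^{k}\mu_{k,j}^2\|\mathbf{\hat{b}}_j\|^2\ge\|\mathbf{\hat{b}}_i\|^2$ occurring in the termination condition of Algorithm \ref{alg:LLLdeep} is literally $\|\pi_i(\mathbf{b}_k)\|\ge\|\pi_i(\mathbf{b}_i)\|$.

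Next I would prove both inclusions. $(\Rightarrow)$ Suppose a basis satisfies (\ref{PDeep}). Its first line says the basis is fully size-reduced, so Line 4 of Algorithm \ref{alg:LLLdeep} leaves it unchanged for every $k$; its second line, via the identity above, gives $\sum_{j=i}^{k}\mu_{k,j}^2\|\mathbf{\hat{b}}_j\|^2=\|\pi_i(\mathbf{b}_k)\|^2\ge\|\pi_i(\mathbf{b}_i)\|^2=\|\mathbf{\hat{b}}_i\|^2=\delta\|\mathbf{\hat{b}}_i\|^2$ (with $\delta=1$) for all $1\le i<k\le n$, so the deep-insertion test in Line 5 never fires; thus $k$ is merely incremented from $2$ to $n+1$ and the algorithm halts with the basis unchanged, i.e. the basis is a fixed point of LLL-deep with $\delta=1$ and unbounded window. $(\Leftarrow)$ Conversely, if LLL-deep with $\delta=1$ terminates on a basis, then, as already noted in the excerpt, no insertion was pending in the last scan, so $\sum_{j=i}^{k}\mu_{k,j}^2\|\mathbf{\hat{b}}_j\|^2\ge\|\mathbf{\hat{b}}_i\|^2$ for all $i<k$, which by the identity is exactly the sorting condition of (\ref{PDeep}); and the size-reduction step (Line 4) forces $|\Re(\mu_{i,j})|\le1/2$, $|\Im(\mu_{i,j})|\le1/2$ for all $j<i$, giving the first line of (\ref{PDeep}). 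Since size reduction never changes the Gram--Schmidt norms, the same argument also shows that LLL-deep returns any (\ref{PDeep})-basis unchanged and returns a (\ref{PDeep})-basis on every input, so the two notions indeed coincide.

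The individual computations here are trivial; the only points needing care are (i) making precise that ``LLL-deep-reduced'' means ``a fixed point of Algorithm \ref{alg:LLLdeep}'', which is why the proof reduces to matching conditions, and (ii) checking that under (\ref{PDeep}) a full pass $k=2,\dots,n$ performs no insertion at all, which is immediate from the projection identity. It is also worth invoking the fact cited earlier that LLL-deep with $\delta=1$ terminates, so that the $(\Leftarrow)$ direction is non-vacuous. No deeper obstacle is anticipated; for complex bases one simply reads the paper's $\mu_{k,j}^2$ as $|\mu_{k,j}|^2$ throughout.
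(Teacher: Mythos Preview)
Your proposal is correct and follows essentially the same route as the paper: the paper's argument (given in the text immediately preceding the proposition) also rests on the identity $\sum_{j=i}^{k}|\mu_{k,j}|^2\|\mathbf{\hat{b}}_j\|^2=\|\pi_i(\mathbf{b}_k)\|^2$ to recognise the deep-insertion termination condition at $\delta=1$ as exactly the sorting condition in (\ref{PDeep}). Your write-up is somewhat more explicit than the paper's (you spell out both inclusions, the role of size reduction, and the fixed-point interpretation), but the underlying idea is identical.
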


%
%
%
%
%

Although this notion is the same as LLL-deep, it offers an alternative implementation as shown in
Algorithm \ref{alg:NEW} which iterates between sorting and size reduction. Again, we refer to
sorting and size reduction as a super-iteration, since each sorting is equivalent to many swaps.
Obviously, the sorted GSO preceding the main loop is not mandatory; we show the algorithm in this
way for convenience of comparison with the DOLLAR detector \cite{Waters:Dollar} later on. Size
reduction does not change the GSO, but it shortens the vectors. Thus, after size reduction the
basis vector $\mathbf{b}_1$ may not be the shortest vector any more, and this may also happen to
other basis vectors. Then, the basis vectors are sorted again. The iteration will continue until
reduced basis is obtained in the end.

\begin{algorithm}[t]
\caption{\quad Parallel LLL-Deep} \label{alg:NEW}
{\bf{Input}:} A basis $\mathbf{B} = [\mathbf{b}_1, ... \mathbf{b}_n]$\\
{\bf{Output}:} The LLL-deep-reduced basis
\begin{algorithmic}[1]

\STATE sorted GSO of the basis

\WHILE {there is any update}

\STATE size reduction of the basis

\STATE sorted GSO of the basis

\ENDWHILE

\end{algorithmic}
\end{algorithm}

Obviously, Algorithm \ref{alg:NEW} finds a LLL-deep basis if it terminates. It is not difficult to
see that Algorithm \ref{alg:NEW} indeed terminates, by using an argument similar to that for
standard LLL with $\delta = 1$ \cite{Akhavi03, LenstraFlags}. The argument is that the order of the
vectors changes only when a shorter vector is inserted, but the number of vectors shorter than a
given length in a lattice is finite. Therefore, the iteration cannot continue forever. We
conjecture it also converges in $O(n\log n)$ super-iterations; in practice it seems to converge in
$O(n)$ super-iterations. Fig. \ref{fig:convergence} shows a typical outcome of numerical
experiments on the LLL potential function (\ref{potential}) against the number of super-iterations.
It is seen that parallel LLL-deep could decrease the potential more than LLL with $\delta = 1$, and
the most significant decrease occurs during the first few super-iterations.


\begin{figure}[t]

\centering\centerline{\epsfig{figure=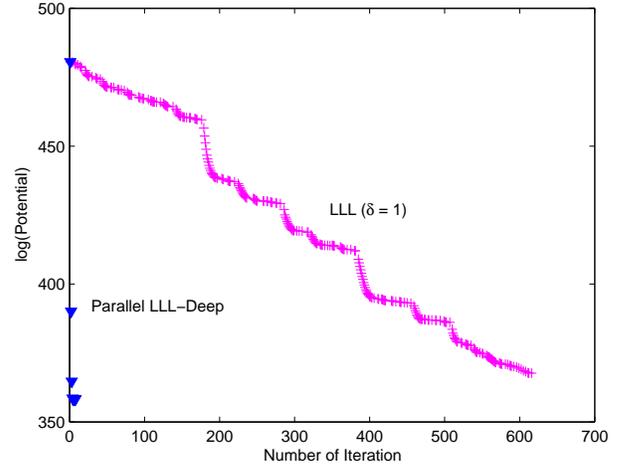,width=9cm}}

\caption{An example of the potential function against the number of iterations for standard LLL
with $\delta=1$ and parallel LLL-deep.}

\label{fig:convergence}
\end{figure}

As shown in Fig. \ref{fig:diagram}(a), the proposed parallel LLL-deep has the advantage of a
regular, modular structure, and further allows for pipeline implementation. Further, it is possible
to run sorted GSO and size reduction simultaneously in Algorithm \ref{alg:NEW}, as shown in Fig.
\ref{fig:diagram}(b). To do so, we just add Line 9 in sorted GSO (Algorithm \ref{alg:SQR}), which
will lead to the same reduced basis. It will cost approximately $n^3$ flops. Thus, the
computational complexity of each super-iteration is roughly $3n^3$, $50\%$ higher than that of
sorted GSO. Since sorted GSO and size reduction are computed simultaneously, the latency will be
reduced. Further, both sorted GSO and size reduction themselves can be parallelized \cite{Heckler}.
We can see that while the overall complexity might be $O(n^4\log n)$, the throughput and latency of
Fig. \ref{fig:diagram}(b) in a pipeline structure are similar to those of V-BLAST.

\begin{figure}[t]

\centering\centerline{\epsfig{figure=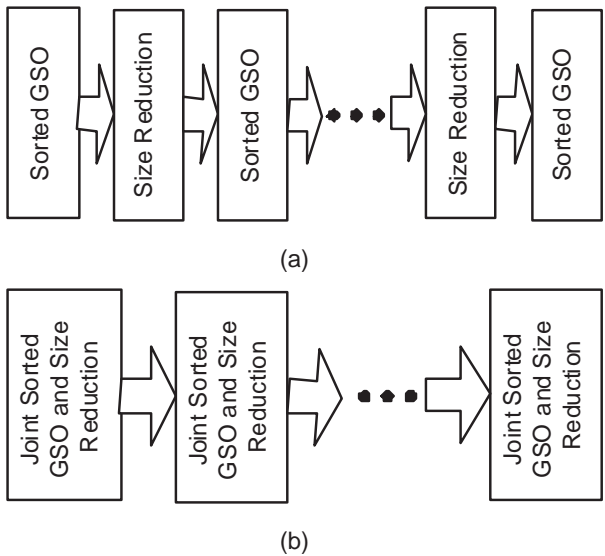,width=8cm}}

\caption{Fixed-complexity implementation of the LLL algorithm. (a) Separate sorted GSO and size
reduction; (b) joint sorted GSO and size reduction.}

\label{fig:diagram}
\end{figure}

\subsubsection{Using Sorted Cholesky Decomposition}

The complexity of the proposed parallel LLL-deep in Fig. \ref{fig:diagram} mostly comes from
repeated GSO. To reduce the complexity, we can replace it by sorted Cholesky decomposition
\cite{Ling:JSTSP09}. This will yield the same reduced basis, but has lower computational
complexity. The complexity of sorted Cholesky decomposition is approximately $n^3/3$ (the initial
multiplication $\mathbf{A}=\mathbf{B}^H\mathbf{B}$ costs approximately $n^3$ due to symmetry),
while that of sorted QR decomposition is approximately $2n^3$.

Since the sorted Cholesky decomposition was given for the dual basis in \cite{Ling:JSTSP09}, for
the sake of clarity we redescribe it in Algorithm~\ref{alg:SCHOL}. Here, $c_{m,n}$ is the
$(m,n)$-th entry of $\mathbf{C}$, while $\overline{c_{m,n}}$ denotes its complex conjugate. For
convenience, we also use MATLAB notation $c_{i:j,k}$ to denote a vector containing those elements
of $\mathbf{C}$. When Algorithm \ref{alg:SCHOL} terminates, the lower triangular part of
$\mathbf{C}$ is the Hermitian transpose of the R factor of the QR decomposition for the basis
$\mathbf{B}$. Similarly, one can run sorted Cholesky decomposition and size reduction
simultaneously.


\begin{algorithm}[t]
\caption{\quad Sorted Cholesky Decomposition} \label{alg:SCHOL}
{\bf{Input}:} Gram matrix $\mathbf{A}=\mathbf{B} ^H\mathbf{B}$\\
{\bf{Output}:} R factor for the sorted basis
\begin{algorithmic}[1]

\STATE let $\mathbf{{C}} = \mathbf{A}$

\FOR{$i = 1, 2, ..., n$}

\STATE $k=\arg\min_{i \leq m \leq n}{a_{m,m}}$


\STATE exchange the $i$ and $k$-th columns and rows of $\mathbf{{C}}$

\STATE $c_{i,i} := \sqrt{c_{i,i}}$

\STATE $c_{i+1:n,i} := \frac{c_{i+1:n,i}}{{c_{i,i}}}$

\FOR{$j = i+1, ..., n$}

\STATE $c_{j:n,j} := c_{j:n,j} - c_{j:n,i}\overline{c_{j,i}}$

\ENDFOR

\ENDFOR

\end{algorithmic}
\end{algorithm}

\subsubsection{Relation to V-BLAST and DOLLAR Detector}

The V-BLAST ordering is a well known technique to improve performance of communication by
pre-sorting the columns of a matrix \cite{vblast:conf}. It maximizes the length of the shortest
Gram-Schmidt vector of $\mathbf{B}$ among all $n!$ possible orders. V-BLAST ordering starts from
the last column of $\mathbf{B}$; it successively chooses the $i$-th Gram-Schmidt vector with the
maximum length for $i = n, n-1, ..., 1$. Several $O(n^3)$ algorithms have been proposed. One of
them is obtained by applying sorted GSO to the dual lattice \cite{Ling:JSTSP09}. This results in
significant computational savings because only a single GSO process is needed.

Now it is clear that the first iteration of parallel LLL-deep is very similar to the DOLLAR
detector in \cite{Waters:Dollar}, which is comprised of a sorted QR decomposition, a size
reduction, and a V-BLAST sorting. Since V-BLAST sorting is very close to sorted QR decomposition,
replacing V-BLAST ordering with sorted QR decomposition does not make much difference. In view of
this, the DOLLAR detector can be seen as the first-order approximation of parallel LLL-deep, which
explains its good performance. It can be seen from Fig. \ref{fig:convergence} that the first
iteration appears to decrease the potential more than any other iterations. Of course, using just
one iteration also limits the performance of the DOLLAR detector.


%

\subsubsection{Relation to Standard LLL and Some Variants}

In fact, the inventors of the LLL algorithm already suggested to successively make
$\|\mathbf{\hat{b}}_1\|, \|\mathbf{\hat{b}}_2\|, \cdots, \|\mathbf{\hat{b}}_n\|$ as small as
possible \cite{BK:Lovasz86}, since they observed that shorter vectors among $\mathbf{\hat{b}}_1,
\mathbf{\hat{b}}_2, \cdots, \mathbf{\hat{b}}_n$ typically appear at the end of the sequence. This
idea is similar to sorted QR decomposition, but in the LLL algorithm it is implemented
incrementally by means of swapping (i.e., in a bubble-sort fashion).



Joint sorting and reduction \cite{Gan:JSR} is also similar to LLL-deep. It is well known that
ordering can be used as a preprocessing step to speed up lattice reduction. A more natural approach
is joint sorting and reduction \cite{Gan:JSR} that uses modified GSO and when a new vector is
picked it picks the one with the minimum norm (projected to the orthogonal complement of the basis
vectors already reduced). That is, it runs sorted GSO only once.




%

Recently, Nguyen and Stehl\'e proposed a greedy algorithm for low-dimensional lattice reduction
\cite{NgSt04}. It computes a Minkowski-reduced basis up to dimension 4. Their algorithm is
recursive; in each recursion, the basis vectors are ordered by increasing lengths. Obviously,
following their idea, we can define another notion of lattice reduction where the basis vectors are
sorted in increasing lengths and also size-reduced. The implementation of this algorithm resembles
that of parallel LLL-deep. That is, such a basis can be obtained by alternatively sorting in
lengths and size reduction. Using the same argument as that of LLL-deep, one can show that this
algorithm will also terminate after a finite number of super-iterations. However, it seems
difficult to prove any bounds for this algorithm.

\subsubsection{Reducing the Complexity of Sequential LLL-Deep}



While the primary goal is to allow parallel pipeline hardware implementation, the proposed LLL-deep
algorithm also has a computational advantage over the conventional LLL-deep algorithm even in a
sequential computer for the first few iterations. We observed that running parallel LLL-deep
crudely will not improve the speed. While parallel LLL-deep is quite effective at the early stage,
keeping running it becomes wasteful at the late stage as the quality of the basis has improved a
lot. In fact, updating occurs less frequently at the late stage; thus the standard serial version
of LLL-deep will be faster. As a result, a hybrid strategy using parallel LLL-deep at the early
stage and then switching to the serial version at the late stage will be more efficient. Parallel
LLL-deep can be viewed as a preprocessing stage for such a hybrid strategy. One can run some
numerical experiments to determine when is the best time to switch from parallel to sequential
LLL-deep.

In Fig. \ref{fig:MIMOtime}, we show the average running time for LLL-deep for the complex MIMO
lattice, on a notebook computer with Pentium Dual CPU working at 1.8 GHz. Sorted GSO is used. It is
seen that the hybrid strategy can improve the speed by a factor up to 3 for $n \leq 50$.


\begin{figure}[t]

\centering\centerline{\epsfig{figure=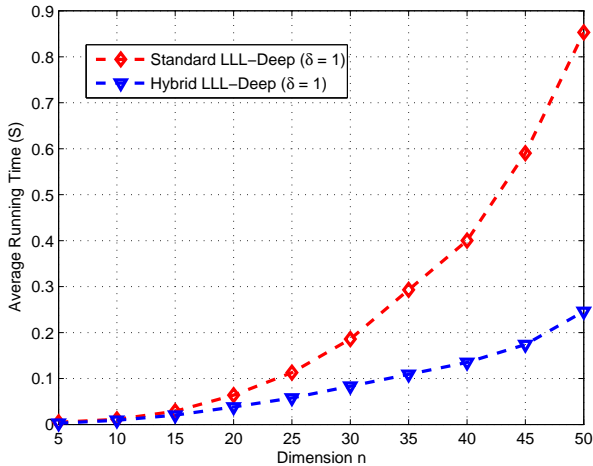,width=9cm}}

\caption{Average running time for standard and hybrid LLL-deep for the complex MIMO lattice. The
number of iterations in parallel LLL-deep is set to 2.}


\label{fig:MIMOtime}
\end{figure}

\subsection{Bit Error Rate (BER) Performance}


We evaluate the impact of a finite number of super-iterations on the performance of parallel LLL
and LLL-deep. In the following simulations of the BER performance, we use MMSE-based lattice
decoding and set $\delta=1$ in the complex LLL algorithm for the best performance.


Fig. \ref{fig:ELLL8x} shows the performance of parallel effective LLL for different numbers of
super-iterations for an $8\times 8$ MIMO system with 64-QAM modulation and SIC detection. The
performance of ML detection is also shown as a benchmark of comparison. It is seen that increasing
the number of iterations improves the BER performance; in particular, with 8 iterations, parallel
effective LLL almost achieves the same performance as standard LLL. On the other hand, the
returning SNR gain after the first few iterations is diminishing as the number of iterations
increases.

\begin{figure}[t]

\centering\centerline{\epsfig{figure=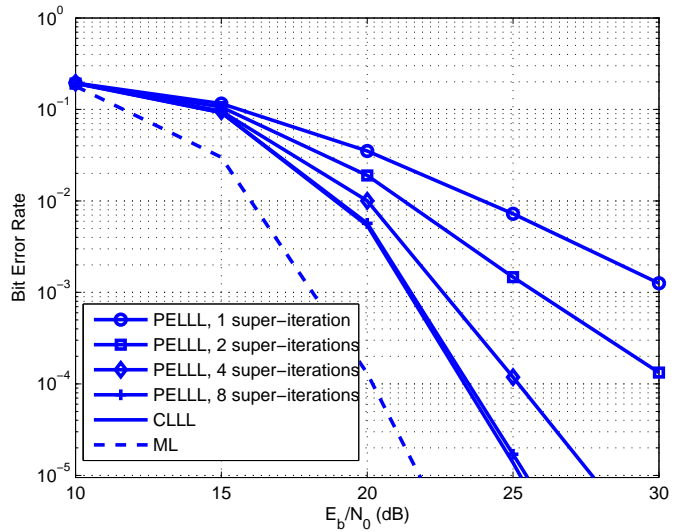,width=10cm}}

\caption{Performance of parallel effective LLL (PELLL) for a $8\times 8$ MIMO system with 64-QAM
and SIC detection.}

\vspace{-0.5cm}

\label{fig:ELLL8x}
\end{figure}

Fig. \ref{fig:DEEP8x} shows the performance of parallel LLL-deep for an $8\times 8$ MIMO system
with 64-QAM modulation and SIC detection. A similar trend is observed.  Note that parallel LLL-deep
with only one super-iteration, which essentially corresponds to the DOLLAR detector in
\cite{Waters:Dollar}, does not achieve full diversity, despite its good performance. Compared to
Fig. \ref{fig:ELLL8x}, we can see that the performance is very similar in the end, but parallel
LLL-deep seems to perform better in the first few super-iterations.

\begin{figure}[t]

\centering\centerline{\epsfig{figure=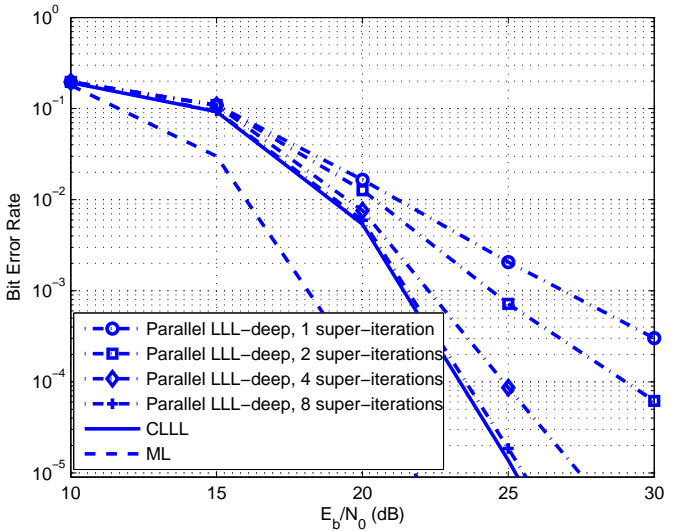,width=10cm}}

\caption{Performance of parallel LLL-deep for an $8\times 8$ MIMO system with 64-QAM and SIC
detection.}

\vspace{-0.5cm}

\label{fig:DEEP8x}
\end{figure}


\section{Concluding Remarks}

We have derived the $O(n^4 \log n)$ average complexity of the LLL algorithm in MIMO communication.
We also proposed the use of effective LLL that enjoys $O(n^3 \log n)$ theoretic average complexity.
Although in practice effective LLL does not significantly reduce the complexity, the $O(n^3 \log
n)$ theoretic bound improves our understanding of the complexity of LLL. To address the issue of
variable complexity, we have proposed two parallel versions of the LLL algorithm that allow
truncation after some super-iterations. Such truncation led to fixed-complexity approximation of
the LLL algorithm. The first such algorithm was based on effective LLL, and we argued that it is
sufficient to run $O(n \log n)$ super-iterations. The second was a parallel version of LLL-deep,
which overcomes the limitation of the DOLLAR detector. We also showed that V-BLAST is a relative of
LLL.

Finally, we point out some open questions.

A precise bound on the complexity of LLL remains to be found. The $O(n^4 \log n)$ bound seems to be
loose in MIMO communication.

Using effective LLL reduction may raise the concern of numerical stability, as the GS coefficients
will grow. Although with the accuracy present in most floating-point implementations, this does not
seem to cause a problem for the practical ranges of $n$ in MIMO communications, a rigorous study on
this issue (e.g., following \cite{Nguyen2}) is a topic of future research.

Although for parallel effective LLL, we proved that the first basis vector is short after $O(n \log
n)$ super-iterations, it remains to show in theory whether full diversity can be achieved or not.
It is also an open question whether similar results exist for parallel LLL-deep.



%


%
%
\appendices

\section{Relations Between Real and Complex LLL} \label{AppendixI}

\subsection{Reducedness}

It is common in literature to convert the complex basis matrix ${\bf B}$ into a real matrix and
then apply the standard real LLL algorithm. We shall analyze the relationship between this approach
and complex LLL. There are many ways to convert ${\bf B}$ into a real matrix. One of them is to
convert each element of ${\bf B}$ locally, i.e.,
\begin{equation} \label{real-equiv2}
  {\bf B_{\text{R}}} =
  \left[%
    \begin{array}{ccc}
      \Re(b_{1,1}) & -\Im(b_{1,1}) & \cdots\\
      \Im(b_{1,1}) & \Re(b_{1,1}) & \cdots\\
      \cdots& \cdots& \ddots\\
    \end{array}%
\right].
\end{equation}
For convenience, we write this conversion as \[\mathbf{B}_{\text{R}}=\mathcal{F}(\mathbf{B}).\]
Obviously, ${\bf B_{\text{R}}}$ is $2n$-dimensional if ${\bf B}$ is $n$-dimensional.

Another conversion is
\begin{equation} \label{real-equiv1}
  {\bf B}_{\text{R}}' = \left[%
    \begin{array}{cc}
      \Re({\bf B}) & -\Im({\bf B}) \\
      \Im({\bf B}) & \Re({\bf B}) \\
    \end{array}%
\right].
\end{equation}
Correspondingly, we write this as \[\mathbf{B}_{\text{R}}'=\mathcal{F}'(\mathbf{B}).\]

Note that when the real-valued LLL algorithm is applied to $\mathbf{B}_{\text{R}}$ or
$\mathbf{B}_{\text{R}}'$, the structures as above are generally not preserved.



Now suppose the complex matrix ${\bf B}$ has been complex LLL-reduced. We then expand the reduced
${\bf B}$ as in (\ref{real-equiv2}) or (\ref{real-equiv1}). What can be said about
$\mathbf{B}_{\text{R}}$ and $\mathbf{B}_{\text{R}}'$?

\begin{lem}\label{lemma1}
Let $\mathbf{B}=\mathbf{\hat{B}}\bm{\mu}^T$ and
$\mathbf{B}_{\text{R}}=\mathbf{\hat{B}}_{\text{R}}(\bm{\mu}_{\text{R}})^T$ be the Gram-Schmidt
orthogonalization of the complex matrix $\mathbf{B}$ and its real counterpart
$\mathbf{B}_{\text{R}}$, respectively. Then we have $\mathbf{\hat{B}}_{\text{R}}=
\mathcal{F}(\mathbf{\hat{B}})$ and $\bm{\mu}_{\text{R}} = \mathcal{F}(\bm{\mu})$.
\end{lem}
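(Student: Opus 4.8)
Here is the approach I would take. The key is \emph{uniqueness} of the Gram--Schmidt decomposition: a nonsingular matrix admits exactly one factorization into a matrix with pairwise orthogonal columns times a unit upper-triangular matrix (the flag of subspaces spanned by the leading columns is forced by the matrix, and orthogonality together with the unit-triangular normalization then determine both factors). So it suffices to check that $\mathcal{F}(\mathbf{\hat{B}})$ and $\mathcal{F}(\bm{\mu})$ actually form such a factorization of $\mathbf{B}_{\text{R}}=\mathcal{F}(\mathbf{B})$. First I would record a few elementary properties of $\mathcal{F}$, each read off from its $2\times2$ blocks: it is additive and multiplicative with $\mathcal{F}(\mathbf{I})=\mathbf{I}$ and $\mathcal{F}(\mathbf{X}^H)=\mathcal{F}(\mathbf{X})^T$; it sends a unit upper-triangular complex matrix to a unit upper-triangular real matrix (each diagonal block is $\mathcal{F}(1)=\mathbf{I}_2$, each strictly lower block is $\mathbf{0}$); and, writing $\mathbf{v}_{\text{R}}$ for the real expansion $(\Re v_1,\Im v_1,\dots,\Re v_n,\Im v_n)^T$ of $\mathbf{v}\in\mathbb{C}^n$, one has $\langle\mathbf{u}_{\text{R}},\mathbf{v}_{\text{R}}\rangle=\Re\langle\mathbf{u},\mathbf{v}\rangle$.

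The geometric crux is that the columns of $\mathcal{F}(\mathbf{\hat{B}})$ are pairwise orthogonal over $\mathbb{R}$. Its $(2i-1)$-th and $(2i)$-th columns are $(\mathbf{\hat{b}}_i)_{\text{R}}$ and $(j\mathbf{\hat{b}}_i)_{\text{R}}$. For $i\neq i'$, all four complex inner products among $\mathbf{\hat{b}}_i,\,j\mathbf{\hat{b}}_i$ and $\mathbf{\hat{b}}_{i'},\,j\mathbf{\hat{b}}_{i'}$ vanish, since $\langle\mathbf{\hat{b}}_i,\mathbf{\hat{b}}_{i'}\rangle=0$ and multiplying an argument by $j$ only rescales the inner product; hence, by the last property above, the corresponding real columns are orthogonal. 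Within one block, $\langle\mathbf{\hat{b}}_i,j\mathbf{\hat{b}}_i\rangle=j\|\mathbf{\hat{b}}_i\|^2$ is purely imaginary, so its real part vanishes and the two real columns of that block are orthogonal as well. In passing this also gives $\|(\mathbf{\hat{b}}_i)_{\text{R}}\|=\|(j\mathbf{\hat{b}}_i)_{\text{R}}\|=\|\mathbf{\hat{b}}_i\|$, so the two real GS lengths attached to complex index $i$ both equal $\|\mathbf{\hat{b}}_i\|$, which will be convenient later when comparing reducedness conditions.

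Then I would apply $\mathcal{F}$ to $\mathbf{B}=\mathbf{\hat{B}}\bm{\mu}^T$: multiplicativity gives $\mathbf{B}_{\text{R}}=\mathcal{F}(\mathbf{\hat{B}})\,\mathcal{F}(\bm{\mu}^T)$, where the first factor has orthogonal columns by the previous paragraph and the second is unit upper-triangular. Since $\det\mathbf{B}_{\text{R}}=|\det\mathbf{B}|^2\neq0$, uniqueness of Gram--Schmidt forces $\mathbf{\hat{B}}_{\text{R}}=\mathcal{F}(\mathbf{\hat{B}})$ and $(\bm{\mu}_{\text{R}})^T=\mathcal{F}(\bm{\mu}^T)$. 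The one subtlety --- and the place I expect to spend the bulk of the bookkeeping --- is that $\mathbf{B}=\mathbf{\hat{B}}\bm{\mu}^T$ uses a plain transpose whereas $\mathcal{F}$ intertwines the \emph{Hermitian} transpose with the real transpose; so to pin down $\bm{\mu}_{\text{R}}$ cleanly one reads the coefficients straight off $\mathbf{b}_i=\sum_{l\leq i}\mu_{i,l}\mathbf{\hat{b}}_l$ together with $(\mu\mathbf{\hat{b}}_l)_{\text{R}}=\Re(\mu)(\mathbf{\hat{b}}_l)_{\text{R}}+\Im(\mu)(j\mathbf{\hat{b}}_l)_{\text{R}}$, obtaining real GS coefficients equal to $\Re\mu_{i,l}$ and $\pm\Im\mu_{i,l}$ arranged exactly as the $2\times2$ blocks of $\mathcal{F}(\bm{\mu})$ (up to transposing each block, i.e.\ replacing each $\mu_{i,l}$ by $\overline{\mu_{i,l}}$, which is immaterial for every later use). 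That completes the argument.
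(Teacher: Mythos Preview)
The paper does not actually prove this lemma; it simply writes ``The proof is omitted.'' So there is nothing to compare your argument against, and your uniqueness-of-GSO approach is exactly the natural way to fill this gap: verify that $\mathcal{F}(\mathbf{\hat{B}})$ has orthogonal columns and that $\mathcal{F}(\bm{\mu}^T)$ is unit upper-triangular, then invoke uniqueness. Your orthogonality check (via $\langle\mathbf{u}_{\text{R}},\mathbf{v}_{\text{R}}\rangle=\Re\langle\mathbf{u},\mathbf{v}\rangle$ and the purely-imaginary observation for the intra-block pair) is clean and correct, as is the by-product $\|\mathbf{\hat{b}}_{\text{R},2i-1}\|=\|\mathbf{\hat{b}}_{\text{R},2i}\|=\|\mathbf{\hat{b}}_i\|$, which is precisely what the paper uses downstream in the proof of Proposition~\ref{Prop1}.

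Your caveat at the end is worth highlighting rather than downplaying: because $\mathcal{F}$ intertwines $(\cdot)^H$ with $(\cdot)^T$ rather than $(\cdot)^T$ with $(\cdot)^T$, the literal identity one obtains is $\bm{\mu}_{\text{R}}=\mathcal{F}(\overline{\bm{\mu}})$, i.e.\ each off-diagonal $2\times2$ block is $\bigl(\begin{smallmatrix}\Re\mu_{i,l}&\Im\mu_{i,l}\\-\Im\mu_{i,l}&\Re\mu_{i,l}\end{smallmatrix}\bigr)$ rather than $\bigl(\begin{smallmatrix}\Re\mu_{i,l}&-\Im\mu_{i,l}\\\Im\mu_{i,l}&\Re\mu_{i,l}\end{smallmatrix}\bigr)$. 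You are right that this is immaterial for every later use in the paper (only $|\Re\mu_{i,l}|$, $|\Im\mu_{i,l}|$, and $|\mu_{i,l}|^2$ ever appear), but strictly speaking the lemma as stated and the displayed form of $\bm{\mu}_{\text{R}}$ in the proof of Proposition~\ref{Prop1} carry this harmless sign slip. Your write-up handles it correctly.
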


The proof is omitted. Lemma \ref{lemma1} says that the structure in (\ref{real-equiv2}) is
preserved under Gram-Schmidt orthogonalization. Using this we now prove the following result.

\begin{prop}\label{Prop1}
If $\mathbf{B}$ is reduced in the sense of complex LLL with parameter $\delta$ ($1/2 < \delta \leq
1$), then $\mathbf{B}_{\text{R}}$ is reduced in the sense of real LLL with parameter $\delta-1/4$.
\end{prop}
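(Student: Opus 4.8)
The plan is to read off the Gram--Schmidt data of $\mathbf{B}_{\text{R}}$ from that of $\mathbf{B}$ using Lemma~\ref{lemma1}, and then to verify, index by index, the two conditions of real LLL --- the size-reduced and Lov\'asz conditions of Definition~\ref{DefinitionLLL} specialized to the real field, with $\delta$ replaced by $\delta':=\delta-1/4$ --- splitting each index $p$ of $\mathbf{B}_{\text{R}}$ according to whether it is even or odd.

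First I would record the structure supplied by Lemma~\ref{lemma1}. Since $\hat{\mathbf{B}}_{\text{R}}=\mathcal{F}(\hat{\mathbf{B}})$, the $2n$ Gram--Schmidt vectors of $\mathbf{B}_{\text{R}}$ come in consecutive pairs: the two real columns produced from $\hat{\mathbf{b}}_i$ occupy positions $2i-1$ and $2i$, are mutually orthogonal, and each has squared length $\|\hat{\mathbf{b}}_i\|^2$. Since $\bm{\mu}_{\text{R}}=\mathcal{F}(\bm{\mu})$, every below-diagonal entry of $\bm{\mu}_{\text{R}}$ equals $0$, $\pm\Re(\mu_{i,j})$, or $\pm\Im(\mu_{i,j})$ for some $j\le i$, and the diagonal $2\times 2$ blocks of $\bm{\mu}_{\text{R}}$ are identity blocks. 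The real size-reduced condition is then immediate from the complex bounds $|\Re(\mu_{i,j})|\le 1/2$ and $|\Im(\mu_{i,j})|\le 1/2$ of (\ref{lll-cond-3}).

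Next I would check the real Lov\'asz condition at each index $1<p\le 2n$. If $p=2i$ is even, the two relevant Gram--Schmidt vectors both have length $\|\hat{\mathbf{b}}_i\|$ and the coefficient $(\mu_{\text{R}})_{2i,2i-1}$ vanishes (it sits in a diagonal identity block), so the condition collapses to $\delta'\le 1$, which holds since $\delta\le 1$; in particular this also covers the boundary index $p=2$. If $p=2i-1$ is odd with $i\ge 2$, the predecessor index is $2(i-1)$, the two Gram--Schmidt lengths are $\|\hat{\mathbf{b}}_i\|$ and $\|\hat{\mathbf{b}}_{i-1}\|$, and $(\mu_{\text{R}})_{2i-1,2i-2}$ lies in the block $\mathcal{F}(\mu_{i,i-1})$, so its square equals $\Im(\mu_{i,i-1})^2$. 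Here I would combine the complex Lov\'asz condition (\ref{lll-cond-4}), rewritten as $\|\hat{\mathbf{b}}_i\|^2\ge(\delta-\Re(\mu_{i,i-1})^2-\Im(\mu_{i,i-1})^2)\|\hat{\mathbf{b}}_{i-1}\|^2$, with the complex size bound $\Re(\mu_{i,i-1})^2\le 1/4$, to obtain $\|\hat{\mathbf{b}}_i\|^2\ge(\delta'-\Im(\mu_{i,i-1})^2)\|\hat{\mathbf{b}}_{i-1}\|^2$, which is exactly the real Lov\'asz inequality at index $2i-1$.

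The only genuine content, and the reason the parameter degrades from $\delta$ to $\delta-1/4$, is this last step: a pair of consecutive indices of $\mathbf{B}_{\text{R}}$ that straddles two complex indices only ``sees'' the imaginary part $\Im(\mu_{i,i-1})$ of the complex coefficient, so one must spend the bound $\Re(\mu_{i,i-1})^2\le 1/4$ to account for the missing real part; since $\Re(\mu_{i,i-1})^2$ can be as large as $1/4$, this loss is unavoidable. I do not anticipate any real obstacle beyond careful bookkeeping --- namely tracking which entry of each $2\times 2$ block of $\mathcal{F}(\bm{\mu})$ is the one called $(\mu_{\text{R}})_{p,p-1}$ for each parity of $p$.
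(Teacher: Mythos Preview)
Your proposal is correct and follows essentially the same route as the paper's proof: both invoke Lemma~\ref{lemma1} to read off $\hat{\mathbf{B}}_{\text{R}}$ and $\bm{\mu}_{\text{R}}$, then split the Lov\'asz check into the within-block case (coefficient $0$, equal lengths, trivially satisfied) and the across-block case, where the bound $|\Re(\mu_{i,i-1})|^2\le 1/4$ is spent to reduce the parameter from $\delta$ to $\delta-1/4$. You are slightly more explicit than the paper in verifying the size-reduced condition, but the argument and the key inequality are identical.
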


\begin{proof}
First, note that
$\|\mathbf{\hat{b}}_{\text{R},2i-1}\|=\|\mathbf{\hat{b}}_{\text{R},2i}\|=\|\mathbf{\hat{b}}_{i}\|$,
and $\bm{\mu}_{\text{R}}$ looks like
\begin{equation} \label{}
  {\bm \mu}_{\text{R}} =
  \left[%
    \begin{array}{ccccc}
      1 & 0 & 0& \cdots & \cdots\\
      0 & 1 & 0& \cdots & \cdots\\
      \Re(\mu_{2,1}) & -\Im(\mu_{2,1}) & 1 & \cdots& \cdots\\
      \Im(\mu_{2,1}) & \Re(\mu_{2,1}) & 0 & 1& \cdots\\
      \cdots  & \cdots & \cdots & \cdots& \ddots\\
    \end{array}%
\right].
\end{equation}
Obviously the Lov\'asz condition is satisfied by the $(2i-1)$ and $(2i)$-th column vectors of
$\mathbf{B}_{\text{R}}$ because
$\|\mathbf{\hat{b}}_{\text{R},2i}\|^2=\|\mathbf{\hat{b}}_{\text{R},2i-1}\|^2 \geq
(\delta-|\mu_{\text{R},2i,2i-1}|^2) \|\mathbf{\hat{b}}_{\text{R},2i-1}\|^2$. Let us examine $(2i)$
and $(2i+1)$-th column vectors. From (\ref{lll-cond-4}) we have
\begin{equation} \label{}
\begin{split}
    \|\mathbf{\hat{b}}_{\text{R},2i+1}\|^2 &\geq (\delta - |\Re(\mu_{i+1,i})|^2-
    |\Im(\mu_{i+1,i})|^2)\|\mathbf{\hat{b}}_{\text{R},2i}\|^2\\
    &\geq (\delta - 1/4 -
    |\Im(\mu_{i+1,i})|^2)\|\mathbf{\hat{b}}_{\text{R},2i}\|^2\\
    &= (\delta - 1/4 -
    |\mu_{\text{R},2i+1,2i}|^2)\|\mathbf{\hat{b}}_{\text{R},2i}\|^2.
\end{split}
\end{equation}
This proves the Proposition.
\end{proof}

\begin{rem}
It appears that little can be said about $\mathbf{B}_{\text{R}}'$.
\end{rem}


By Proposition \ref{Prop1}, even if a 2-dimensional complex lattice $\mathbf{B}$ is Gauss-reduced
(which is arguably the strongest reduction), its real equivalent $\mathbf{B}_{\text{R}}$ is not
necessarily LLL-reduced with parameter $\delta = 1$. It is easy to construct such a lattice
\[  \left[%
    \begin{array}{cc}
      1 & 0 \\
      1/2+j/2 & \sqrt{2}/2 \\
    \end{array}%
\right].\]


\subsection{Approximation factors}







Let us compare the approximation factors for the first vectors $\mathbf{b}_1$ and
$\mathbf{b}_{\text{R},1}$ in the reduced bases. Since $\det \mathbf{B} = \text{det} ^2
\mathbf{B}_{\text{R}}$ and the shortest nonzero vectors have the same length in the real and
complex bases, we only need to compare $\alpha^{n-1}$ with $\beta^{2n-1}$. Recall that
$\alpha=1/(\delta-1/2)$ and $\beta=1/(\delta-1/4)$. It is easy to check $\alpha \geq \beta^2$,
because
\begin{equation}
    (\delta-1/4)^2\geq\delta-1/2
\end{equation}
where the equality holds if and only if $\delta = 3/4$. Therefore, asymptotically, complex LLL has
a larger approximation factor unless $\delta = 3/4$. In fact, a minor difference between the error
performances of real and complex LLL can be observed when $n$ becomes large (e.g., $n \geq 10$),
although they are indistinguishable at small dimensions \cite{ComplexLLL}.

\section*{Acknowledgment}
The authors would like to thank D. Stehl\'e and J. Jald\'en for helpful discussions.


\footnotesize
\bibliographystyle{IEEEtran}
\bibliography{IEEEabrv,lingbib}

%





\end{document}